\newtheorem{theorem}{Theorem}[section]
\newtheorem{lemma}[theorem]{Lemma}
\newtheorem{problem}[theorem]{Problem}
\newtheorem{definition}[theorem]{Definition}
\newtheorem{remark}[theorem]{Remark}
\newcommand{\Post}{\operatorname{Post}}
\newcommand{\delay}{\mathrm{delay}}
\newcommand{\ctrl}{\mathrm{ctrl}}
\newcommand{\send}{\mathrm{send}}
\newcommand{\scc}{\mathrm{sc}}
\newcommand{\ca}{\mathrm{ca}}
\newcommand{\req}{\mathrm{req}}
\newcommand{\alt}{\mathrm{alt}}
\begin{document}

\title[A Symbolic Approach to the Design of \\ Nonlinear Networked Control Systems]{A Symbolic Approach to the Design of \\ Nonlinear Networked Control Systems}
\thanks{The research leading to these results has been partially supported by the Center of Excellence DEWS and received funding from the European Union Seventh Framework Programme [FP7/2007-2013] under grant agreement n. 257462 HYCON2 Network of excellence.}

\author[Alessandro Borri, Giordano Pola, Maria D. Di Benedetto]{
Alessandro Borri$^{\ast}$, Giordano Pola$^{\ast}$, Maria D. Di Benedetto$^{\ast}$}
\address{$^{\ast}$
Department of Electrical and Information Engineering, Center of Excellence DEWS,
University of L{'}Aquila, 67100 L{'}Aquila, Italy}
\email{ \{alessandro.borri,giordano.pola,mariadomenica.dibenedetto\}@univaq.it}

\begin{abstract}
Networked control systems (NCS) are spatially distributed systems where communication among plants, sensors, actuators and controllers occurs in a shared communication network. NCS have been studied for the last ten years and important research results have been obtained. These results are in the area of stability and stabilizability. However, while important, these results must be complemented in different areas to be able to design effective NCS. In this paper we approach the control design of NCS using symbolic (finite) models. Symbolic models are abstract descriptions of continuous systems where one symbol corresponds to an "aggregate" of continuous states. 
We consider a fairly general multiple-loop network architecture where plants communicate with digital controllers through a shared, non-ideal, communication network characterized by variable sampling and transmission intervals, variable communication delays, quantization errors, packet losses and limited bandwidth. We first derive a procedure to obtain symbolic models that are proven to approximate NCS in the sense of alternating approximate bisimulation. We then use these symbolic models to design symbolic controllers that realize specifications expressed in terms of automata on infinite strings. An example is provided where we address the control design of a pair of nonlinear control systems sharing a common communication network. The closed--loop NCS obtained is validated through the OMNeT++ network simulation framework.
\end{abstract}

\maketitle

%\category{I.2.8}{Artificial Intelligence}{Problem Solving, Control Methods, and Search}[Control theory]

%\keywords{Networked control systems, symbolic models, symbolic control, alternating approximate bisimulation}

\section{Introduction}
In the last decade, the integration of physical processes with networked computing units led to a new generation of control systems, termed Networked Control Systems (NCS). NCS are complex, heterogeneous, spatially distributed systems where physical processes interact with distributed computing units through non--ideal communication networks. While the process is often described by continuous dynamics, algorithms implemented on microprocessors in the computing units are generally modeled by finite state machines or other models of computation. In addition, communication network properties depend on the features of the communication channel and of the protocol selected, e.g. sharing rules and wired versus wireless network. In the last few years NCS have been the object of great interest in the research community and important research results have been obtained with respect to stability and stabilizability problems, see e.g. \cite{NCSHesphana,HeemelsSurvey,NCS-HSCC2010}. However, these results must be complemented to meet more general and complex specifications when controlling a NCS. In this paper, we propose to approach the control design of NCS by using symbolic (finite) models (see e.g. \cite{DiscAbs,paulo} and the references therein), which are typically used to address control problems where software and hardware interact with the physical world. 

This paper presents two connected results. The first is a novel approach to NCS modeling, where a wide class of non-idealities in the communication network are considered such as variable sampling/transmission intervals, variable communication delays, quantization errors, packet dropouts and limited bandwidth. By using this general approach to modeling a NCS, we can derive symbolic models that approximate incrementally stable \cite{IncrementalS} nonlinear NCS in the sense of alternating approximate bisimulation \cite{PolaSIAM2009} with arbitrarily good accuracy. This result is strong since the existence of an alternating approximate bisimulation guarantees that (i) control strategies synthesized on the symbolic models can be applied to the original NCS, independently of the particular realization of the non--idealities in the communication network; (ii) if a solution does not exist for the given control problem (with desired accuracy) for the symbolic model, no control strategy exists for the original NCS. 
The second result is about the design of a NCS where the control specifications are expressed in terms of automata on infinite strings. Given a NCS and a specification, we explicitly derive a symbolic controller such that the controlled system meets the specification \emph{in the presence of the considered non-idealities in the communication network}. To illustrate the use of our results, we apply the methodology to derive a controller for a pair of nonlinear systems sharing a common communication network. To validate the controller, the closed--loop NCS is simulated in the OMNeT++ network simulation framework \cite{Omnet}. The results of this paper follow the approach on construction of symbolic models for nonlinear control systems reported in \cite{PolaAutom2008,PolaSIAM2009,PolaSCL10,PolaTAC12,MajidTAC11}. \\
The paper is organized as follows. Section 2 introduces the notation employed in the sequel. In Section 3 we present the class of networked control systems that we consider in the paper. Section 4 reports some preliminary definitions of the notions of systems, approximate bisimulation and approximate parallel composition. Section 5 proposes symbolic models that approximate incrementally stable NCS in the sense of alternating approximately bisimulation. In Section 6 we address the symbolic control design of NCS. A realistic implementation of the symbolic control of a NCS on \mbox{OMNeT++} is included in Section 7. Section 8 offers concluding remarks.

\section{Notation}\label{sec:Notation}
The identity map on a set $A$ is denoted by $1_{A}$. Given two sets $A$ and $B$, if $A$ is a subset of $B$ we denote by $1_{A}:A\hookrightarrow B$ or simply by $\imath$ the natural inclusion map taking any $a\in A$ to $\imath (a)  =a\in B$. 
Given a set $A$ we denote $A^{2}=A\times A$ and $A^{n+1}=A\times A^{n}$ for any $n\in \mathbb{N}$. 
Given a pair of sets $A$ and $B$ and a function $f:A\rightarrow B$ we denote by $f^{-1}: B\rightarrow A$ the inverse function of $f$ such that $f^{-1}(b)=a$ if and only if $f(a)=b$ for any $a\in A$. 
Given a pair of sets $A$ and $B$ and a relation $\mathcal{R}\subseteq A\times B$, the symbol $\mathcal{R}^{-1}$ denotes the inverse relation of $\mathcal{R}$, i.e. $\mathcal{R}^{-1}:=\{(b,a)\in B\times A:( a,b)\in \mathcal{R}\}$. The symbols $\mathbb{N}$, $\mathbb{N}_0$, $\mathbb{Z}$, $\mathbb{R}$, $\mathbb{R}^{+}$ and $\mathbb{R}_{0}^{+}$ denote the set of natural, nonnegative integer, integer, real, positive real, and nonnegative real numbers, respectively. Given an interval $[a,b]\subseteq \mathbb{R}$ with $a\leq b$ we denote by $[a;b]$ the set $[a,b]\cap \mathbb{N}$. We denote by $\lfloor x\rfloor:=\max\{{n\in\mathbb{Z} \vert n\leq x}\}$ the floor and by $\lceil x \rceil:=\min\{{n\in\mathbb{Z} \vert n\geq x}\}$ the ceiling of a real number $x$. Given a vector $x\in\mathbb{R}^{n}$ we denote by $\Vert x\Vert$ the infinity norm and by $\Vert x\Vert_{2}$ the Euclidean norm of $x$. A continuous function \mbox{$\gamma:\mathbb{R}_{0}^{+}\rightarrow\mathbb{R}_{0}^{+}$} is said to belong to class $\mathcal{K}$ if it is strictly increasing and
\mbox{$\gamma(0)=0$}; a function $\gamma$ is said to belong to class $\mathcal{K}_{\infty}$ if \mbox{$\gamma\in\mathcal{K}$} and $\gamma(r)\rightarrow\infty$
as $r\rightarrow\infty$. A continuous function \mbox{$\beta:\mathbb{R}_{0}^{+}\times\mathbb{R}_{0}^{+}\rightarrow\mathbb{R}_{0}^{+}$} is said to belong to class $\mathcal{KL}$ if for each fixed $s$ the map $\beta(r,s)$ belongs to class $\mathcal{K}_{\infty}$ with respect to $r$ and for each fixed $r$ the map $\beta(r,s)$ is decreasing with respect to $s$ and $\beta(r,s)\rightarrow0$ as \mbox{$s\rightarrow\infty$}. 
Given $\mu\in\mathbb{R}^{+}$ and $A\subseteq \mathbb{R}^{n}$, we set $[A]_{\mu}=\mu\mathbb{Z}^{n} \cap A$; if $B=\bigcup_{i\in [1;N]}A^{i}$ then $[B]_{\mu}=\bigcup_{i\in [1;N]} ([A]_{\mu})^{i}$. Consider a bounded set $A \subseteq \mathbb{R}^n$ with interior. Let $H=[a_1,b_1]\times[a_2,b_2]\times \dots \times [a_n,b_n]$ be the smallest hyperrectangle containing $A$ and set $\hat{\mu}_{A}=\min_{i=1,2,\dots,n} (b_i-a_i)$. It is readily seen that for any $\mu \leq \hat{\mu}_A$ and any $a\in A$ there always exists $b\in [A]_{\mu}$ such that $\Vert a-b \Vert \leq \mu$. 
Given $a\in A\subseteq \mathbb{R}^{n}$ and a precision $\mu\in\mathbb{R}^{+}$, the symbol $[a]_{\mu}$ denotes a vector in $\mu \, \mathbb{Z}^{n}$ such that $\Vert a-[a]_{\mu} \Vert \leq \mu/2$. Any vector $[a]_{\mu}$ with $a\in A$ can be encoded by a finite binary word of length %$\text{bits}(X,\eta):= \lceil $%
$ \lceil\log_{2} \vert [A]_{\mu} \vert \rceil$.  

\section{Networked Control Systems}\label{sec:modelingNCS}
The class of Network Control Systems (NCS) that we consider in this paper has been inspired by the models reviewed in \cite{HeemelsSurvey} and is depicted in Figure \ref{NCSpic}. The sub--systems composing the NCS are described hereafter.\\

\begin{figure}
\begin{center}
\includegraphics[scale=1.2]{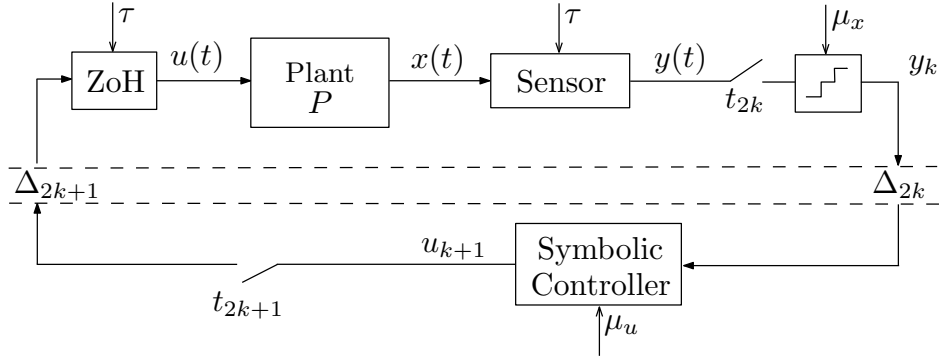}
\caption{Networked control system.}
\label{NCSpic}
\end{center}
\end{figure}

\textit{Plant.} The plant $P$ of the NCS is a nonlinear control system in the form of: 
\begin{equation}
\left\{
\begin{array}
{l}
\dot{x}(t)=f(x(t),u(t)),\\
x\in X\subseteq \mathbb{R}^{n},\\
x(0)\in X_{0}\subseteq X,\\
u(\cdot)\in \mathcal{U},
\end{array}
\right.
\label{NCSeq1}
\end{equation}
where $x(t)$ and $u(t)$ are the state and the control input at time $t\in\mathbb{R}^{+}_{0}$, $X$ is the state space, $X_{0}$ is the set of initial states and $\mathcal{U}$ is the set of control inputs that are supposed to be piecewise--constant functions of time from intervals of the form \mbox{$]a,b[\subseteq\mathbb{R}$} to $U\subseteq \mathbb{R}^{m}$. The set $U$ is assumed to be compact, convex with the origin as an interior point. The function $f:X\times U \rightarrow X$ is such that $f(0,0)=0$ and assumed to be Lipschitz on compact sets. 
In the sequel we denote by $\mathbf{x}(t,x_{0},u)$ the state reached by (\ref{NCSeq1}) at time $t$ under the control input $u$ from the initial state $x_{0}$; this point is uniquely determined, since the assumptions on $f$ ensure existence and uniqueness of trajectories. We assume that the control system $P$ is forward complete, namely that every trajectory is defined on an interval of the form $]a,\infty\lbrack$. Sufficient and necessary conditions for a control system to be forward complete can be found in \cite{fc-theorem}.\\

\textit{Holder and Sensor.} A Zero-order-Holder (ZoH) and a (ideal) sensor are placed before and after the plant $P$, respectively. We assume that: 

\begin{itemize}
\item[(A.1)] The ZoH and the sensor are synchronized and update their output values at times that are integer multiples of the same interval $\tau\in\mathbb{R}^{+}$, i.e.
\[
\begin{array}
{rl}
u(s \tau+t) =u(s\tau), & y(s \tau+t)  =y(s\tau)  =x(s\tau),
\end{array}
\]
%
%\begin{align}%{lll}
%, & t\in [0,\tau[,  s\in \mathbb{N}_0.\label{NCSeq2} \nonumber\\
%  & ,\nonumber
%\end{align}
for $t\in [0,\tau[$ and $s\in \mathbb{N}_0$, where $s$ is the index of the sampling interval (starting from $0$).
\end{itemize}

\textit{Symbolic controller.} A symbolic controller is a function: 
\[
C:[X]_{\mu_{x}} \rightarrow[U]_{\mu_{u}},
\]
with $\mu_{x},\mu_{u}\in\mathbb{R}^{+}$. In the sequel we suppose that $\mu_{x} \leq \hat{\mu}_X$ and $\mu_{u} \leq \hat{\mu}_U$ so that the domain and co--domain of $C$ are non--empty. If $X$ is bounded, the quantization on $X$ implies that the amount of information associated with any function $C$ so defined is finite. We assume that:
\begin{itemize}
\item[(A.2)] There is a time-varying computation time 
\[
\Delta^{\ctrl}_k\in [\Delta^{\ctrl}_{\min},\Delta^{\ctrl}_{\max}],\,\,\, k\in\mathbb{N}, 
\]
for the symbolic controller to return its output value.
\end{itemize}

\textit{Limited bandwidth.} Let $B_{\max}\in\mathbb{R}^{+}$ be the maximum capacity of the digital communication channel (expressed in bits per second (bps)). Such a constraint imposes a minimum positive `time-to-send', in order to send finite-length information through the communication channel. This requires, in turn, state and input to be quantized before being sent through the network. The minimum sending intervals in the two branches of the network on the feedback loop are given by:
\[
\begin{array}
{cc}
\Delta_{\send}^{\scc} = \frac{\lceil\log_{2} \vert [X]_{\mu_{x}} \vert \rceil}{B_{\max}}, &
\Delta_{\send}^{\ca} = \frac{\lceil\log_{2} \vert [U]_{\mu_{u}} \vert \rceil}{B_{\max}},
\end{array}
\]
where '$\scc$' refers to the sensor-to-controller branch and '$\ca$' to the controller-to-actuator branch of the network.\\

\textit{Time-varying unknown bounded delays.} The actual time occurring for the data to cross the network is larger than the minimum sending time given by the bandwidth requirements. We define the sequence $\{\bar{\Delta}_{k}^{\delay}\}_{k\in\mathbb{N}_0}$ that takes into account time-varying network delays including e.g. congestion, other accesses to the communication channel and any kind of scheduling protocol. The delays induced by the two branches of the network on the feedback loop are:
\[
\begin{array}
{cc}
\Delta_{2k} = \Delta_{\send}^{\scc} + \bar{\Delta}_{2k}^{\delay}, &
\Delta_{2k+1} = \Delta_{\send}^{\ca} + \bar{\Delta}_{2k+1}^{\delay}.
\end{array}
\]
Furthermore we consider a sequence $\{\Delta_{k}^{\req}\}_{k\in\mathbb{N}_0}$ of network \emph{waiting times} that model the delay between the network request and the network access. In the proposed NCS, any scheduling protocol can be considered, provided that it satisfies: 
\begin{itemize}
\item[(A.3)] The sequence of network communication delays is bounded, i.e. 
\[
\bar{\Delta}_{k}^{\delay}\in [\Delta^{\delay}_{\min}, \bar{\Delta}_{\max}^{\delay}],
\]
for all $k\in\mathbb{N}_0$.
\item[(A.4)] The sequence of network waiting times is bounded, i.e.
\[
\Delta_{k}^{\req}\in [0, \Delta_{\max}^{\req}],
\]
for all $k\in\mathbb{N}_0$.
\end{itemize}

\textit{Packet dropout.} Assume that one or more messages can be lost during the transmission through the network. Because of the bounded delays introduced by the network (see Assumptions (A.2), (A.3), (A.4)), if a node does not receive new information within a time less than
\[
\Delta_{\send}^{\scc}+\Delta_{\max}^{\ctrl}+\Delta_{\send}^{\ca}+2\Delta^{\req}_{\max}+2\bar{\Delta}_{\max}^{\delay},
\]
a message is lost. By following the \emph{emulation approach}, see e.g. \cite{HeemelsSurvey}, in dealing with dropout we assume that: 
\begin{itemize}
\item[(A.5)] The maximum number of subsequent dropouts over the network is bounded.
\end{itemize}
The previous assumption allows us to manage packet loss by considering an increased \emph{equivalent delay} $\Delta_{\max}^{\delay}$ introduced by the network, instead of the original $\bar{\Delta}_{\max}^{\delay}$. \\

We now describe recursively the evolution of the NCS, starting from the initial time $t=0$. Consider the $k$--th iteration in the feedback loop. 
The sensor requests access to the network and after a waiting time $\Delta_{2k}^{\req}$, it sends at time $t_{2k}$ the latest available sample $y_k=[y(t_{2k})]_{\mu_{x}}$ where $\mu_{x}$ is the precision of the quantizer that follows the sensor in the NCS scheme (see Figure \ref{NCSpic}). \\
The \emph{sensor-to-controller (sc)} link of the network introduces a delay $\Delta_{2k}$, after which the sample reaches the controller that computes in $\Delta^{\ctrl}_{k}$ time units the value $u_{k+1}=C(y_k)$. The controller requests access to the network and sends the control sample $u_{k+1}$ at time $t_{2k+1}$ (after a bounded waiting time $\Delta_{2k+1}^{\req}$). \\
The \emph{controller-to-actuator (ca)} link of the network introduces a delay $\Delta_{2k+1}$, after which the sample reaches the ZoH. At time $t=A_{k+1} \tau$ the ZoH is refreshed to the control value $u_{k+1}$ where $ A_{k+1}:=\lceil (t_{2k+1}+\Delta_{k}^{\ca})/\tau\rceil$. The next iteration starts and the sensor requests access to the network again.

Consider now the sequence of control values $\{u_{k}\}_{k\in\mathbb{N}_0}$. Each value is held up for $N_k:=A_{k+1}-A_{k}$ sampling intervals. Due to the bounded delays, one gets:
\[
N_k \in [{N}_{\min}; {N}_{\max}],
\]
with:
\begin{align}
{N}_{\min} &=\left\lceil \Delta_{\min}/\tau \right\rceil, \label{eq:N_min}\\
{N}_{\max}  &= \left\lceil \Delta_{\max}/\tau \right \rceil,\label{eq:N_max}
\end{align}
where we set: 
\begin{align}
\Delta_{\min} & :=\Delta_{\send}^{\scc}+\Delta_{\min}^{\ctrl}+\Delta_{\send}^{\ca}+2\Delta_{\min}^{\delay},\label{eq:delta_min}\\
\Delta_{\max} & :=\Delta_{\send}^{\scc}+\Delta_{\max}^{\ctrl}+\Delta_{\send}^{\ca}+2\Delta^{\req}_{\max}+2\Delta_{\max}^{\delay}\label{eq:delta_max}.
\end{align}
For later purposes we collect the computation and communication parameters appearing in the previous description in the following vector:
\begin{equation}\label{eq:NCS_parameters}
\mathcal{C}_{\text{NCS}}=(\tau,\mu_{x},\mu_{u},B_{\max},\Delta_{\min},\Delta_{\max}).
\end{equation}
In the sequel we refer to the described NCS by $\Sigma$. The collection of trajectories of the plant $P$ in the NCS $\Sigma$ is denoted by $\text{Traj}(\Sigma)$. Moreover we refer to a trajectory of $\Sigma$ with initial state $x_0$ and control input $u$ by $\mathbf{x}(.,x_{0},u)$.

\section{Systems, approximate equivalence and composition}\label{subsec:ApproxEquiv}
We will use the notion of systems as a unified mathematical framework to describe networked control systems as well as their symbolic models.

\begin{definition}
\cite{paulo} 
A system $S$ is a sextuple: 
\begin{equation}\label{def:system}
S=(X,X_{0},U,\rTo,Y,H),
\end{equation}
consisting of: 

\begin{itemize}
\item a set of states $X$;
\item a set of initial states $X_{0}\subseteq X$;
\item a set of inputs $U$;
\item a transition relation $\rTo \subseteq X\times U\times X$;
\item a set of outputs $Y$;
\item an output function $H:X\rightarrow Y$. 
\end{itemize}
A transition $(x,u,x^{\prime})\in\rTo$ is denoted by $x\rTo^{u}x^{\prime}$. For such a transition, state $x^{\prime}$ is called a $u$-successor, or simply a successor, of state $x$. The set of $u$-successors of a state $x$ is denoted by $\Post_u (x)$.
\end{definition}

A state run of $S$ is a (possibly infinite) sequence of transitions:
\begin{equation}
x_{0} \rTo^{u_{1}} x_{1} \rTo^{u_{2}} \,\, {...}
\label{eq:state_run}
\end{equation}
with $x_0\in X_0$. An output run is a (possibly infinite) sequence $\{y_i\}_{i\in\mathbb{N}_0}$ such that there exists a state run of the form (\ref{eq:state_run}) with $y_i=H(x_i)$, $i\in\mathbb{N}_0$. 
System $S$ is said to be:
\begin{itemize}
\item \textit{countable}, if $X$ and $U$ are countable sets;
\item \textit{symbolic}, if $X$ and $U$ are finite sets;
\item \textit{metric}, if the output set $Y$ is equipped with a metric $d:Y\times Y\rightarrow\mathbb{R}_{0}^{+}$;
\item \textit{deterministic}, if for any $x\in X$ and $u\in U$ there exists at most one state $x^{\prime}\in X$ such that $x \rTo^{u} x^{\prime}$;
\item \textit{non--blocking}, if for any $x\in X$ there exists at least one state $x^{\prime}\in X$ such that $x \rTo^{u} x^{\prime}$ for some $u\in U$.
%\begin{itemize}
%\item \textit{countable}, if $X$ and $U$ are countable sets;
%\item \textit{symbolic}, if $X$ and $U$ are finite sets;
%\item \textit{metric}, if the output set $Y$ is equipped with a metric $d:Y\times Y\rightarrow\mathbb{R}_{0}^{+}$;
%\item \textit{deterministic}, if for any $x\in X$ and $u\in U$ there exists at most one state $x^{\prime}\in X$ such that $x \rTo^{u} x^{\prime}$;
%\item \textit{non--blocking}, if for any $x\in X$ there exists at least one state $x^{\prime}\in X$ such that $x \rTo^{u} x^{\prime}$ for some $u\in U$.
%%\item \textit{accessible}, if ...
\end{itemize}
\begin{definition}
Given two systems $S_{i}=(X_{i},X_{0,i},U_{i},$ $\rTo_{i},Y_{i},H_{i})$ ($i=1,2$), $S_{1}$ is a \textit{sub--system} of $S_{2}$, denoted $S_{1} \sqsubseteq S_{2}$, if $X_{1}\subseteq X_{2}$, $X_{0,1}\subseteq X_{0,2}$, $U_{1}\subseteq U_{2}$, $\rTo_{1}\subseteq \rTo_{2}$, $Y_{1}\subseteq Y_{2}$, $H_{1}(x)=H_{2}(x)$ for any $x\in X_{1}$.
\end{definition}

In the sequel we consider bisimulation relations \cite{Milner,Park} to relate properties of networked control systems and symbolic models. Intuitively, a bisimulation relation between a pair of systems $S_{1}$ and $S_{2}$ is a relation between the corresponding state sets explaining how a state run $r_{1}$ of $S_{1}$ can be transformed into a state run $r_{2}$ of $S_{2}$ and vice versa. While typical bisimulation relations require that $r_{1}$ and $r_{2}$ share the same output run, the notion of approximate bisimulation, introduced in \cite{AB-TAC07}, relaxes this condition by requiring the outputs of $r_{1}$ and $r_{2}$ to simply be close, where closeness is measured with respect to the metric on the output set. 

\begin{definition}
\cite{AB-TAC07}
\label{ASR} 
Let \mbox{$S_{i}=(X_{i},X_{0,i},U_{i},\rTo_{i},Y_{i},H_{i})$} ($i=1,2$) be metric systems with the same output sets $Y_{1}=Y_{2}$ and metric $d$ and consider a precision $\varepsilon\in\mathbb{R}^{+}_{0}$. A relation $\mathcal{R}\subseteq X_{1}\times X_{2}$ is an $\varepsilon$--approximate simulation relation from $S_{1}$ to $S_{2}$ if the following conditions are satisfied:

\begin{itemize}
\item[(i)] for every $x_{1}\in X_{0,1}$ there exists $x_{2}\in X_{0,2}$ such that 
$(x_{1},x_{2})\in \mathcal{R}$;
\item[(ii)] for every $(x_{1},x_{2})\in \mathcal{R}$ we have \mbox{$d(H_{1}(x_{1}),H_{2}(x_{2}))\leq\varepsilon$};
\item[(iii)] for every $(x_{1},x_{2})\in \mathcal{R}$ the existence of \mbox{$x_{1}\rTo_{1}^{u_{1}}x'_{1}$ in $S_{1}$} implies the existence of \mbox{$x_{2}\rTo_{2}^{u_{2}}x'_{2}$} in $S_{2}$ satisfying $(x^{\prime}%
_{1},x^{\prime}_{2})\in \mathcal{R}$.
\end{itemize}

System $S_{1}$ is $\varepsilon$--simulated by $S_{2}$ or $S_{2}$ $\varepsilon$--simulates $S_{1}$, denoted \mbox{$S_{1}\preceq_{\varepsilon}S_{2}$}, if there exists an $\varepsilon$--approximate simulation relation from $S_{1}$ to $S_{2}$. The relation $\mathcal{R}$ is an \mbox{$\varepsilon$--approximate} bisimulation relation between $S_{1}$ and $S_{2}$ if $\mathcal{R}$ is an \mbox{$\varepsilon$--approximate} simulation relation from $S_{1}$ to $S_{2}$ and $\mathcal{R}^{-1}$ is an \mbox{$\varepsilon$--approximate} simulation relation from $S_{2}$ to $S_{1}$. Furthermore, systems $S_{1}$ and $S_{2}$ are $\varepsilon$--bisimilar, denoted \mbox{$S_{1}\cong_{\varepsilon}S_{2}$}, if there exists an $\varepsilon$--approximate bisimulation relation $\mathcal{R}$ between $S_{1}$ and $S_{2}$. When $\varepsilon=0$ systems $S_{1}$ and $S_{2}$ are said to be exactly bisimilar.
\end{definition}

In this work we also consider a generalization of approximate bisimulation, called alternating approximate bisimulation, that has been introduced in \cite{PolaSIAM2009} to relate properties of control systems affected by non-determinism and their symbolic models. 

\begin{definition}
\label{ASR_S} 
\cite{PolaSIAM2009,paulo}
Let $S_{i}=(X_{i},X_{0,i},U_{i},\rTo_{i},$ $Y_{i},H_{i})$ ($i=1,2$) be metric systems with the same output sets $Y_{1}=Y_{2}$ and metric $d$ and consider a precision $\varepsilon\in\mathbb{R}^{+}_{0}$. A relation $\mathcal{R}\subseteq X_{1}\times X_{2}$ is an alternating $\varepsilon$--approximate ($A\varepsilon A$) simulation relation from $S_{1}$ to $S_{2}$ if the following conditions are satisfied:
\begin{itemize}
\item[(i)] for every $x_{1}\in X_{0,1}$ there exists $x_{2}\in X_{0,2}$ such that
$(x_{1},x_{2})\in \mathcal{R}$;
\item[(ii)]  for every $(x_{1},x_{2})\in\mathcal{R}$ we have $d(H_{1}(x_{1}),H_{2}(x_{2}))\leq\varepsilon$;
\item[(iii)] for every $(x_{1},x_{2})\in\mathcal{R}$ and for every $u_{1}\in U_{1}$ there exists $u_{2}\in U_{2}$ such that for every $x_{2}^{\prime}\in \Post_{u_2}(x_2)$ there exists $x_{1}^{\prime}\in \Post_{u_1}(x_1)$ satisfying $(x_{1}^{\prime},x_{2}^{\prime} )\in\mathcal{R}$.
\end{itemize}
System $S_{1}$ is alternating $\varepsilon$--simulated by $S_{2}$ or $S_{2}$ alternating $\varepsilon$--simulates $S_{1}$, denoted \mbox{$S_{1}\preceq_{\varepsilon}^{\alt} S_{2}$}, if there exists an $A\varepsilon A$ simulation relation from $S_{1}$ to $S_{2}$. 
Relation $\mathcal{R}$ is an $A \varepsilon A$ bisimulation relation between $S_{1}$ and $S_{2}$ if $\mathcal{R}$ is an $A \varepsilon A$ simulation relation from $S_{1}$ to $S_{2}$ and $\mathcal{R}^{-1}$ is an $A \varepsilon A$ simulation relation from $S_{2}$ to $S_{1}$. Furthermore, systems $S_{1}$ and $S_{2}$ are $A \varepsilon A$--bisimilar, denoted \mbox{$S_{1}\cong_{\varepsilon}^{\alt} S_{2}$}, if there exists an $A \varepsilon A$ bisimulation relation $\mathcal{R}$ between $S_{1}$ and $S_{2}$.
\end{definition}

When $\varepsilon=0$, the above notion can be viewed as the two-player version of the notion of alternating bisimulation \cite{alternating}. 
We conclude this section by introducing the notion of approximate parallel composition proposed in \cite{TabuadaTAC08} that is employed in the sequel to capture (feedback) interaction between systems and symbolic controllers. 

\begin{definition}
\cite{TabuadaTAC08}
\label{composition} 
Consider a pair of metric systems \mbox{$S_{i}=(X_{i},X_{0,i},U_{i},\rTo_{i},Y_{i},H_{i})$} ($i=1,2$) with the same output sets $Y_{1}=Y_{2}$ and metric $d$, and a parameter $\theta\in\mathbb{R}_{0}^{+}$. The $\theta$--approximate parallel composition of $S_{1}$ and $S_{2}$ is the system
\[
S_{1}\Vert_{\theta}S_{2}=(X,X_{0},U,\rTo,Y,H), 
\]
where:
\begin{itemize}
\item $X=\{(x_{1},x_{2})\in X_{1}\times X_{2}\text{ } |\text{ } d(H_{1}(x_{1}),H_{2}(x_{2}))\leq \theta\}$;
\item $X_{0}=X\cap(X_{0,1}\times X_{0,2})$;
\item $U=U_{1}\times U_{2}$;
\item $(x_{1},x_{2})\rTo^{(u_{1},u_{2})}(x_{1}^{\prime},x_{2}^{\prime})$ if $x_{1}\rTo_{1}^{u_{1}}x_{1}^{\prime}$ and $x_{2}\rTo_{2}^{u_{2}}x_{2}^{\prime}$;
\item $Y=Y_{1}$;
\item $H(x_{1},x_{2})=H_{1}(x_{1})$ for any $(x_{1},x_{2})\in X$.
\end{itemize}
\end{definition}

The interested reader is referred to \cite{TabuadaTAC08,paulo} for a detailed description of the notion of approximate parallel composition and of its properties.

\section{Symbolic models for NCS}\label{sec:SymbolicModels}

In this section we propose symbolic models that approximate NCS in the sense of alternating approximate bisimulation. 
For notational simplicity we denote by $u$ any constant control input $\tilde{u}\in\mathcal{U}$ s.t. $\tilde{u}(t)=u$ for all times $t\in\mathbb{R}^{+}$. Set 
\[
X_e=\bigcup_{N\in [N_{\min};N_{\max}]} X^{N}.
\]
Given the NCS $\Sigma$ and the vector $\mathcal{C}_{\text{NCS}}$ of parameters in (\ref{eq:NCS_parameters}), consider the following system:

\begin{equation*}
S(\Sigma):=(X_{\tau},X_{0,\tau},U_{\tau},\rTo_{\tau},Y_{\tau},H_{\tau}),
\label{systemTD}
\end{equation*}
where:

\begin{itemize}
\item $X_{\tau}$ is the subset of $ X_0 \cup X_e$ such that for any $x=(x_{1},x_{2},...,x_{N})\in X_{\tau}$, with $N\in [N_{\min};N_{\max}]$, the following conditions hold:
\begin{align}
x_{i+1} &=\mathbf{x}(\tau,x_{i},u^{-}), \qquad i\in [1;N-2];\label{eq:states_1}\\
x_{N} & =\mathbf{x}(\tau,x_{N-1},u^{+})\label{eq:states_2};
\end{align}
for some constant functions $u^{-}$, $u^{+}\in [U]_{\mu_{u}}$.
\item $X_{0,\tau}=X_0$;
\item $U_{\tau}=[U]_{\mu_{u}}$;
\item $x^{1}\rTo^{u}_{\tau} x^{2}$, where:
\[
\begin{array}
{l}
\left\{
\begin{array}
{llll}
x_{i+1}^{1} & = & \mathbf{x}(\tau,x_{i}^{1},u^{-}_{1}), & i\in [1;N_{1}-2];\\
x_{N_{1}}^{1} & = & \mathbf{x}(\tau,x^1_{N_{1}-1},u^{+}_{1}); &
\end{array}
\right.
\\
\\
\left\{
\begin{array}
{llll}
x_{i+1}^{2} & = & \mathbf{x}(\tau,x_{i}^{2},u^{-}_{2}), & i\in [1;N_{2}-2];\\
x_{N_{2}}^{2} & = & \mathbf{x}(\tau,x^2_{N_{2}-1},u_2^{+}); &
\end{array}
\right.
\\
\\
\left\{
\begin{array}
{llll}
u_2^{-} & = & u_1^{+}; &\\
u_2^{+} & = & u; &\\

%x^{2}_{0} & = & x^{1}_{N_1}; &\\
x^{2}_{1} & = & \mathbf{x}(\tau,x^1_{N_{1}},u^{-}_{2}); &
\end{array}
\right.
\end{array}
\]
for some $N_{1},N_{2}\in [N_{\min};N_{\max}]$;
\item $Y_{\tau}=X_{\tau}$;
\item $H_{\tau}=1_{X_{\tau}}$.
\end{itemize}

Note that $S(\Sigma)$ is non-deterministic because, depending on the values of $N_{2}$, more than one $u$--successor of $x^{1}$ may exist. The construction of the set of states of $S(\Sigma)$ is based on an extended-state-space approach, and has been inspired by known approaches in the analysis of discrete--time time--varying delay systems, see e.g. \cite{TDSDT}.  
Since the state vectors of $S(\Sigma)$ are built from trajectories of $\Sigma$ sampled every $\tau$ time units, $S(\Sigma)$ collects all the information of the NCS $\Sigma$ available at the sensor (see Figure \ref{NCSpic}) as formally stated in the following result.
\begin{theorem}
Given the NCS $\Sigma$ and the system $S(\Sigma)$ the following properties hold:
\begin{itemize}
\item for any trajectory $\mathbf{x}(.,x_{0},u)\in\text{Traj}(\Sigma)$ of $\Sigma$, there exists a state run 
\begin{equation}
x^{0} \rTo^{u_{1}} x^{1} \rTo^{u_{2}} \,\, {...}\,, 
\label{cond1}
\end{equation}
of $S(\Sigma)$ with $x^{i}=(x_{1}^{i},x_{2}^{i},...,x_{N_{i}}^{i})$ such that $x^{0}=x_{0}$ and the sequence of states 
\begin{equation}
\begin{array}
{rclclcl}
{x^{0}} & , & \underbrace{x^{1}_{1}, {...}, x^{1}_{N_{0}+1}}_{x^{1}} & , & \underbrace{x^{2}_{1}, {...}, x^{2}_{N_{1}}}_{x^{2}} & , & {...}
\end{array}
\label{cond2}
\end{equation}

obtained by concatenating each component of the vectors $x^{i}$, coincides with the sequence of sensor measurements
\begin{align}
& y(0), y(\tau), {...}, y((N_{0}+1)\tau),y((N_{0}+2)\tau), {...}, \nonumber\\
& y((N_{0}+N_{1}+1)\tau), {...}
\label{cond3}
\end{align}
in the NCS $\Sigma$;
\item for any state run (\ref{cond1}) of $S(\Sigma)$, there exists a trajectory $\mathbf{x}(.,x_{0},u)\in\text{Traj}(\Sigma)$ of $\Sigma$ such that the sequence of states in (\ref{cond2}) coincides with the sequence (\ref{cond3}) of sensor measurements in the NCS $\Sigma$.
\end{itemize}
\end{theorem}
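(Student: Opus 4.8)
The plan is to prove the two bullets by explicit construction, exploiting the fact that a trajectory of $\Sigma$ and a state run of $S(\Sigma)$ encode the same data: an initial state together with the sequence of held, quantized control values and the number of sampling intervals each is held. Forward completeness of the plant guarantees that every sampled point $\mathbf{x}(j\tau,x_0,u)$ exists, and the Lipschitz hypothesis on $f$ makes each one-step map $\mathbf{x}(\tau,\cdot,u)$ single-valued; hence both the $\tau$-sampling of a trajectory (which, by (A.1), equals the sensor sequence since $y(j\tau)=x(j\tau)$) and the defining recursions (\ref{eq:states_1})--(\ref{eq:states_2}) of $X_\tau$ are unambiguous.

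First I would treat the forward direction. Given $\mathbf{x}(\cdot,x_0,u)\in\mathrm{Traj}(\Sigma)$, I read off from the NCS evolution the switching instants $A_0=0<A_1<A_2<\dots$ at which the held value changes, the values $u_0,u_1,u_2,\dots\in[U]_{\mu_u}$, and the counts $N_k=A_{k+1}-A_k$; Assumptions (A.2)--(A.5), via the bounds (\ref{eq:delta_min})--(\ref{eq:delta_max}) and the definitions (\ref{eq:N_min})--(\ref{eq:N_max}), force every $N_k\in[N_{\min};N_{\max}]$, so each window is an admissible length for a state of $S(\Sigma)$. I then partition the samples $\xi_j=\mathbf{x}(j\tau,x_0,u)$ into vectors $x^i$ by assigning $u^-_i:=u_{i-1}$ and $u^+_i:=u_i$: inside each vector the internal transitions up to the penultimate component run under $u^-_i$ while the final one already applies the next value $u^+_i$, which is exactly (\ref{eq:states_1})--(\ref{eq:states_2}), and the junction conditions $u^-_{i+1}=u^+_i$ and $x^{i+1}_1=\mathbf{x}(\tau,x^i_{N_i},u^-_{i+1})$ hold between consecutive vectors because they are nothing but the one-step recursion obeyed by the samples of a single trajectory under a piecewise-constant input. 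By construction the concatenation (\ref{cond2}) is the ordered list $\xi_0,\xi_1,\xi_2,\dots$, i.e.\ the sensor sequence (\ref{cond3}).

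For the converse I would reverse the reading. Given a run (\ref{cond1}), the labels $u^-_i,u^+_i$ of each $x^i$, together with the compatibility $u^-_{i+1}=u^+_i$ built into $\rTo_\tau$, determine a unique piecewise-constant $u\in\mathcal U$ with values in $[U]_{\mu_u}\subseteq U$ that switches at the sampling instants prescribed by the vector lengths; moreover every length $N_i\in[N_{\min};N_{\max}]$ occurring in the run is, by (\ref{eq:N_min})--(\ref{eq:N_max}), realizable by some admissible choice of delays and waiting times meeting (A.2)--(A.5), so $u$ is a legitimate input of $\Sigma$. Forward completeness then produces $\mathbf{x}(\cdot,x_0,u)$, and uniqueness of solutions forces its $\tau$-samples to coincide componentwise with the run, so (\ref{cond2}) again equals (\ref{cond3}).

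The step I expect to be the main obstacle is the bookkeeping of this ``staggered'' assignment, in which the value $u_i$ is first announced as the last-step input $u^+_i$ of $x^i$ and only then carried over as the bulk input $u^-_{i+1}=u^+_i$ of the next vector. It is precisely this one-step overlap that makes the first vector carry $N_0+1$ components rather than $N_0$ in (\ref{cond2}), and the delicate point is to verify that the resulting index shift propagates consistently so that the concatenated components align position-by-position with the samples $y(0),y(\tau),y(2\tau),\dots$ of (\ref{cond3}). Once this alignment is checked on the initial segment, a straightforward induction on $i$ extends it to the whole (possibly infinite) run, closing both directions.
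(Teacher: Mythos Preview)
Your proposal is correct and is precisely the explicit unpacking of the definitions of $S(\Sigma)$ and of the NCS evolution; the paper itself does not give a proof, stating only that ``the proof of the above result is a direct consequence of the definition of $S(\Sigma)$ and is therefore omitted.'' Your construction---reading off the held inputs $u_k$, the switching indices $A_k$, and the hold counts $N_k$ from a trajectory (and conversely rebuilding the piecewise-constant input from a run), together with the bookkeeping of the one-step overlap that makes the first vector have $N_0+1$ components---is exactly what that omitted verification amounts to.
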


The proof of the above result is a direct consequence of the definition of $S(\Sigma)$ and is therefore omitted. 
System $S(\Sigma)$ can be regarded as a metric system with the metric $d_{Y_{\tau}}$ on $Y_{\tau}$ naturally induced by the metric $d_{X}(x_{1},x_{2})=\Vert x_{1} - x_{2}\Vert$ on $X$, as follows. Given any $x^{i}=(x_{1}^{i},x_{2}^{i},...,x_{N_{i}}^{i})$, $i=1,2$, we set:
\begin{equation}\label{eq:metric}
d_{Y_{\tau}}(x^{1},x^{2}):=
\left\{
\begin{array}
{ll}
\max_{i\in[1;N]}\Vert x^{1}_{i}-x^{2}_{i}\Vert, & \text{ if } N_{1}= N_{2}=N. \\
+\infty, & \text{ otherwise.}
\end{array}
\right.
\end{equation}

Although system $S(\Sigma)$ contains all the information of the NCS $\Sigma$ available at the sensor, it is not a finite model. We now propose a system which approximates $S(\Sigma)$ and is symbolic. 
Define the following system:

\begin{equation}
S_{\ast}(\Sigma):=(X_{\ast},U_{\ast},\rTo_{\ast},Y_{\ast},H_{\ast}), 
\label{symbmodel}
\end{equation}
where:

\begin{itemize}
\item $X_{\ast}$ is the subset of $[X_0 \cup X_e]_{\mu_{x}}$ such that for any $x^{\ast}=(x^{\ast}_{1},x^{\ast}_{2},...,x^{\ast}_{N})\in X_{\ast}$, with $N\in [N_{\min};N_{\max}]$, the following condition holds:
\begin{align}
x^{\ast}_{i+1} &=[\mathbf{x}(\tau,x^{\ast}_{i},u^{-}_{\ast})]_{\mu_{x}}, \qquad i\in [1;N-2];\label{eq:symb_states_1}\\
x^{\ast}_{N} & =[\mathbf{x}(\tau,x^{\ast}_{N-1},u^{+}_{\ast})]_{\mu_{x}};\label{eq:symb_states_2}
\end{align}
for some constant functions $u^{-}_{\ast}$, $u^{+}_{\ast}\in [U]_{\mu_{u}}$.

\item $X_{0,\ast}=[X_0]_{\mu_{x}}$;

\item $U_{\ast}=[U]_{\mu_{u}}$;

\item $x^{1}\rTo^{u_{\ast}}_{\ast} x^{2}$, where:
\[
\begin{array}
{l}
\left\{
\begin{array}
{llll}
x_{i+1}^{1} & = & [\mathbf{x}(\tau,x_{i}^{1},u^{-}_{1})]_{\mu_{x}}, & i\in [1;N_{1}-2];\\
x_{N_{1}}^{1} & = & [\mathbf{x}(\tau,x^1_{N_{1}-1},u^{+}_{1})]_{\mu_{x}}; &
\end{array}
\right.
\\
\\
\left\{
\begin{array}
{llll}
x_{i+1}^{2} & = & [\mathbf{x}(\tau,x_{i}^{2},u^{-}_{2})]_{\mu_{x}}, & i\in [1;N_{2}-2];\\
x_{N_{2}}^{2} & = & [\mathbf{x}(\tau,x^2_{N_{2}-1},u_2^{+})]_{\mu_{x}}; &
\end{array}
\right.
\\
\\
\left\{
\begin{array}
{llll}
u_2^{-} & = & u_1^{+}; &\\
u_2^{+} & = & u_{\ast}; &\\
%x^{2}_{0} & = & x^{1}_{N_1}; &\\
x^{2}_{1} & = & [\mathbf{x}(\tau,x^1_{N_{1}},u^{-}_{2})]_{\mu_{x}}; &
\end{array}
\right.
\end{array}
\]
for some $N_{1},N_{2}\in [N_{\min};N_{\max}]$;
%\[
%\begin{cases}
%x^{1}  =(x^{1}_{1},x^{1}_{2},...,x^{1}_{N_{1}})\in [X^{N_{1}}]_{\mu_{x}}, & N_1 \in [N_{\min};N_{\max}];\\
%x^{2}  =(x^{2}_{1},x^{2}_{2},...,x^{2}_{N_{2}})\in [X^{N_{2}}]_{\mu_{x}}, & N_2 \in [N_{\min};N_{\max}]; \\
%x^{2}_{i}  =[\mathbf{x}(\tau,x^{2}_{i-1},u_{\ast,2}^{-})]_{\mu_{x}}, & i \in [1;N_{2}-1];\\
%x^{2}_{N_2}  =[\mathbf{x}(\tau,x^{2}_{N_2-1},u)]_{\mu_{x}};\\
%\end{cases}
%\]
%for some $u_{\ast,2}^{-}=u_{\ast,1}^{+}\in U_{\ast}$, where $u_{\ast,i}^{-}$ and $u_{\ast,i}^{+}$ are the control inputs appearing in Eqns. (\ref{eq:symb_states_1}) and (\ref{eq:symb_states_2}), with $i=1,2$, and we set $x^{2}_{0}:=x^{1}_{N_1}$ for consistency.
 
\item $Y_{\ast}=X_{\tau}$;
\item $H_{\ast}=\imath:X^{\ast}\hookrightarrow Y_{\ast}$.
\end{itemize}

System $S_{\ast}(\Sigma)$ is metric when we regard the set of outputs $Y_{\ast}$ as being equipped with the metric in (\ref{eq:metric}).

\begin{remark}
System $S_{\ast}(\Sigma)$ is countable and becomes symbolic when the set of states $X$ is bounded. 
This model can be constructed in a finite number of steps, as inferable from its definition. Space complexity in storing data of $S_{\ast}(\Sigma)$ is generally rather large, because of the large size of the set of states $X_{e}$. This choice in the definition of $X_e$ makes it easier to compare the NCS and $S_{\ast}(\Sigma)$ in terms of alternating approximate bisimulation as we will see in the forthcoming developments (see Theorem \ref{thmain}). However, for computational purposes it is possible to give a more concise representation of $X_e$ as follows: any state $(x_1,x_2,\dots ,x_N)$ in $X_e$ can be equivalently represented by the tuple $(x_1,u^{-},u^{+},N)$ where $u^{-}$ and $u^{+}$ are the control inputs in Eqns. (\ref{eq:states_1})--(\ref{eq:states_2}). 
\end{remark}
\begin{remark}
While the semantics of the NCS $\Sigma$ is described in closed--loop, the symbolic models in (\ref{symbmodel}) approximate the NCS in open--loop. Indeed, the symbolic models proposed approximate the plant $P$ and the communication network, i.e. all entities in the NCS feedback loop except for the symbolic controller $C$ (see Figure \ref{NCSpic}). This choice allows us to view the closed--loop NCS as the parallel composition \cite{ModelChecking} of two symbolic systems and therefore to adapt standard results in computer science for the control design of NCS, as shown in Section \ref{sec:control}. 
\end{remark}
A key ingredient of our results is the notion of incremental global
asymptotic stability that we report hereafter. 

\begin{definition}
\label{dGAS}
\cite{IncrementalS} 
Control system (\ref{NCSeq1}) is incrementally globally asymptotically stable ($\delta$--GAS) if it is forward complete and there exist a $\mathcal{KL}$ function $\beta$ and a $\mathcal{K}_{\infty}$ function $\gamma$ such that for any $t\in{\mathbb{R}_{0}^{+}}$, any $x_{1},x_{2}\in X$ and any
$u\in\mathcal{U}$, the following condition is satisfied:
\begin{equation*}
\Vert \mathbf{x}(t,x_1,u)-\mathbf{x}(t,x_2,u) \Vert\leq \beta(\Vert x_{1}-x_{2}\Vert ,t).
\label{delta_GAS}
\end{equation*}
\end{definition}

The above incremental stability notion can be characterized in terms of dissipation inequalities, as follows.
\begin{definition}
\label{dGAS_Lyapunov}
\cite{IncrementalS} 
A smooth function $V:X\times X\rightarrow\mathbb{R}$ is called a $\delta$--GAS Lyapunov function for the control system (\ref{NCSeq1})  if there exist $\lambda\in\mathbb{R}^{+}$ and $\mathcal{K}_{\infty}$ functions $\underline{\alpha}$ and $\overline{\alpha}$ such that, for any $x_{1},x_{2}\in X$ and any $u\in U$, the following conditions hold true:
\begin{itemize}
\item[(i)] $\underline{\alpha}(\Vert{x_{1}-x_{2}}\Vert)\leq V(x_{1},x_{2})\leq\overline{\alpha}(\Vert{x_{1}-x_{2}}\Vert)$,
\item[(ii)] $\frac{\partial{V}}{\partial{x_{1}}} f(x_{1},u)+\frac{\partial{V}}{\partial{x_{2}}} f(x_{2},u) \leq -\lambda V(x_{1},x_{2})$.
\end{itemize}
\end{definition}

The following result adapted from \cite{IncrementalS} completely characterizes $\delta$--GAS in terms of existence
of $\delta$--GAS Lyapunov functions.

\begin{theorem}
\label{TH-IGAS}
Control system (\ref{NCSeq1}) is $\delta$--GAS if and only if it admits a $\delta$--GAS Lyapunov function. 
\end{theorem}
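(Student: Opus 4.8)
The plan is to establish the two implications of the equivalence separately, treating sufficiency as a direct comparison argument and necessity as a converse Lyapunov construction.

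\emph{Sufficiency (Lyapunov function $\Rightarrow$ $\delta$--GAS).} Suppose a $\delta$--GAS Lyapunov function $V$ exists and fix $x_1,x_2\in X$ together with an input $u\in\mathcal{U}$. Set $W(t):=V(\mathbf{x}(t,x_1,u),\mathbf{x}(t,x_2,u))$. Since both trajectories are fed the common input value $u(t)\in U$ at each time $t$, differentiating $W$ along the flow and applying condition (ii) of Definition \ref{dGAS_Lyapunov} pointwise gives $\dot{W}(t)\le-\lambda W(t)$. A comparison (Gr\"onwall--type) argument then yields $W(t)\le V(x_1,x_2)e^{-\lambda t}$, and feeding this into the sandwich bound (i) produces
\[
\underline{\alpha}(\Vert\mathbf{x}(t,x_1,u)-\mathbf{x}(t,x_2,u)\Vert)\le\overline{\alpha}(\Vert x_1-x_2\Vert)\,e^{-\lambda t}.
\]
Hence $\Vert\mathbf{x}(t,x_1,u)-\mathbf{x}(t,x_2,u)\Vert\le\beta(\Vert x_1-x_2\Vert,t)$ with $\beta(r,t):=\underline{\alpha}^{-1}(\overline{\alpha}(r)e^{-\lambda t})$, and one checks directly that $\beta\in\mathcal{KL}$ (for each fixed $t$ it is $\mathcal{K}_\infty$ in $r$, and for each fixed $r$ it decreases to $0$ as $t\to\infty$). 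By Definition \ref{dGAS} the system is $\delta$--GAS.

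\emph{Necessity ($\delta$--GAS $\Rightarrow$ Lyapunov function).} This is the substantive direction. The first step is to recast $\delta$--GAS as a set--stability property: viewing the pair $(x_1,x_2)$ as the state of the product system $\dot{x}_1=f(x_1,u)$, $\dot{x}_2=f(x_2,u)$ driven by the common input $u$, the estimate $\Vert\mathbf{x}(t,x_1,u)-\mathbf{x}(t,x_2,u)\Vert\le\beta(\Vert x_1-x_2\Vert,t)$ states exactly that the invariant diagonal $\mathcal{D}=\{(x,x):x\in X\}$ is globally asymptotically stable, uniformly over all admissible inputs. I would then construct a candidate Lyapunov function by a weighted Massera/Sontag--type formula, e.g.
\[
V(x_1,x_2):=\sup_{u\in\mathcal{U},\,t\ge0}\ \phi\big(\Vert\mathbf{x}(t,x_1,u)-\mathbf{x}(t,x_2,u)\Vert\big)\,e^{\lambda t},
\]
where $\phi\in\mathcal{K}_\infty$ and $\lambda>0$ are selected, using Sontag's characterization of $\mathcal{KL}$ functions, so that the supremum is finite; the factor $e^{\lambda t}$ is precisely what will force the linear decay rate demanded in condition (ii).

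The remaining steps are to verify the two Lyapunov conditions for this candidate. The bounds (i) follow from the definition of the supremum: the term with $t=0$ gives the lower bound $\phi(\Vert x_1-x_2\Vert)\le V$, while the $\delta$--GAS estimate bounds the supremum from above by a $\mathcal{K}_\infty$ function of $\Vert x_1-x_2\Vert$, yielding $\underline{\alpha}$ and $\overline{\alpha}$. The decrease (ii) is obtained by exploiting the semigroup property of the flow together with the $e^{\lambda t}$ weighting, which gives $V(\mathbf{x}(s,x_1,u),\mathbf{x}(s,x_2,u))\le e^{-\lambda s}V(x_1,x_2)$ and hence $\dot{V}\le-\lambda V$ along the product dynamics uniformly in $u$. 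The main obstacle lies here: the raw supremum is in general only continuous (or locally Lipschitz), not smooth as required by Definition \ref{dGAS_Lyapunov}, so a smoothing/mollification argument in the spirit of Lin--Sontag--Wang must be inserted while preserving both the sandwich bounds and the uniform infinitesimal decrease, and one must confirm that the standing hypotheses on $f$ (Lipschitz on compact sets, forward completeness) are enough to guarantee the regularity this smoothing requires. By contrast, if one is content to first obtain a decrease of the generic form $\dot{V}\le-\alpha(V)$ with $\alpha\in\mathcal{K}_\infty$, the passage to the linear rate $-\lambda V$ is a routine reshaping $V\mapsto\rho\circ V$ and is not the crux of the argument.
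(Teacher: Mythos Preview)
The paper does not supply its own proof of this theorem: it is stated as a result ``adapted from \cite{IncrementalS}'' and then used as a black box. There is therefore no proof in the paper to compare your proposal against.

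That said, your outline is the standard route one would find in the original reference and related converse Lyapunov literature. The sufficiency direction is correct as written. For necessity, your plan---recast $\delta$--GAS as uniform global asymptotic stability of the diagonal in the product system and invoke a Massera/Sontag-type construction with a subsequent Lin--Sontag--Wang smoothing---is precisely the argument underlying Angeli's result. Two small caveats worth flagging: first, the smoothing step relies on local Lipschitz continuity of the vector field, which is among the standing assumptions here, but the construction in the converse theorems typically works on $\mathbb{R}^n$ rather than a general subset $X$, so some care is needed if $X$ is not open; second, obtaining the exact exponential decrease $\dot V\le-\lambda V$ after smoothing (rather than merely $\dot V\le-\alpha(V)$) requires the reshaping $V\mapsto\rho\circ V$ you mention, and one should check that this reshaping preserves smoothness of $V$, not just continuity. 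Neither of these is a genuine gap, just points where the details need to be filled in carefully.
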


\begin{remark}
In this paper we assume that the nonlinear control system $P$ is $\delta$--GAS. Backstepping techniques for the incremental stabilization of nonlinear control systems have been recently proposed in \cite{MajidBackstepping}.
\end{remark}

We now have all the ingredients to present the main result of this section.

\begin{theorem}
\label{thmain}
Consider the NCS $\Sigma$ and suppose that the control system $P$ enjoys the following properties:
\begin{itemize}
\item[(H1)] There exists a $\delta$--GAS Lyapunov function satisfying the inequality (ii) in Definition \ref{dGAS_Lyapunov} for some $\lambda\in\mathbb{R}^{+}$;
\item[(H2)] There exists a $\mathcal{K}_{\infty}$ function $\gamma$ such that\footnote{Note that since $V$ is smooth, if the state space $X$ is bounded, which is the case in many concrete applications, one can always choose $\gamma(\Vert w-z \Vert)= \left( \sup_{x,y\in X} \Vert \frac{\partial{V}}{\partial{y}}(x,y) \Vert \right) \Vert w-z \Vert$.}:
\[
V(x,x^{\prime})-V(x,x^{\prime \prime})\leq\gamma(\Vert{x^{\prime}-x^{\prime \prime}}\Vert),
\]
for every $x,x^{\prime},x^{\prime\prime}\in X$.
\end{itemize}
For any desired precision $\varepsilon\in\mathbb{R}^{+}$, sampling time $\tau\in\mathbb{R}^{+}$ and state quantization $\mu_{x}\in\mathbb{R}^{+}$ satisfying the following inequality:
\begin{equation}
\label{bisim_condition1}
\mu_{x} \leq \min \left \{\gamma^{-1} \left ( \left ( 1-e^{-\lambda \tau} \right ) \underline{\alpha}(\varepsilon) \right ) , \overline{\alpha}^{-1}(\underline{\alpha}(\varepsilon)), \hat{\mu}_{X} \right \}\text{,}
\end{equation}

systems $S(\Sigma)$ and $S_{\ast}(\Sigma)$ are $A\varepsilon A$--bisimilar. 
\label{polaut}
\end{theorem}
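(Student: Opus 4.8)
The plan is to exhibit a single relation $\mathcal{R}\subseteq X_{\tau}\times X_{\ast}$ and show that it is an $A\varepsilon A$ simulation from $S(\Sigma)$ to $S_{\ast}(\Sigma)$ while $\mathcal{R}^{-1}$ is one from $S_{\ast}(\Sigma)$ to $S(\Sigma)$. Let $V$ be the $\delta$--GAS Lyapunov function supplied by (H1). I would declare $(x,x^{\ast})\in\mathcal{R}$ precisely when $x=(x_{1},\dots,x_{N})$ and $x^{\ast}=(x^{\ast}_{1},\dots,x^{\ast}_{N})$ have the same length $N\in[N_{\min};N_{\max}]$, carry the same embedded control inputs $u^{-},u^{+}\in[U]_{\mu_{u}}$, and satisfy $V(x_{i},x^{\ast}_{i})\leq\underline{\alpha}(\varepsilon)$ for every $i\in[1;N]$. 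Using the Lyapunov sublevel set rather than the raw distance is the essential design choice: from $\|x_{i}-x^{\ast}_{i}\|\leq\varepsilon$ one would only recover $V\leq\overline{\alpha}(\varepsilon)$, which is too weak to be reproduced after one contraction, whereas the sublevel-set formulation closes the induction.

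Next I would discharge the three clauses of Definition \ref{ASR_S} for $\mathcal{R}$. Clause (ii) is immediate, since $V(x_{i},x^{\ast}_{i})\leq\underline{\alpha}(\varepsilon)$ together with property (i) of Definition \ref{dGAS_Lyapunov} forces $\|x_{i}-x^{\ast}_{i}\|\leq\varepsilon$, hence $d_{Y_{\tau}}(x,x^{\ast})\leq\varepsilon$. For clause (i), given $x_{0}\in X_{0}$ I pick a lattice point $x^{\ast}_{0}\in[X_{0}]_{\mu_{x}}$ with $\|x_{0}-x^{\ast}_{0}\|\leq\mu_{x}$ (available because $\mu_{x}\leq\hat{\mu}_{X}$) and invoke $\mu_{x}\leq\overline{\alpha}^{-1}(\underline{\alpha}(\varepsilon))$ to conclude $V(x_{0},x^{\ast}_{0})\leq\overline{\alpha}(\mu_{x})\leq\underline{\alpha}(\varepsilon)$; this is exactly the role of the second term in (\ref{bisim_condition1}) and furnishes the base case. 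Clause (iii) is the alternating matching: given $(x^{1},x^{1\ast})\in\mathcal{R}$ and an input $u$, I set $u_{\ast}=u$, and for each symbolic successor — equivalently, for each admissible $N_{2}\in[N_{\min};N_{\max}]$ — I pick the exact successor generated by the same $N_{2}$ and the same input stream. Matching $N_{2}$ is what aligns the nondeterminism of the two systems, and matching the embedded inputs guarantees that both flow under a common control, so that the $\delta$--GAS estimate applies.

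The analytical core is the per-sample estimate. Fixing a common sampled input $v\in[U]_{\mu_{u}}$ and writing $y=\mathbf{x}(\tau,x^{\ast}_{i},v)$ for the exact flow of the symbolic component, so that $x^{\ast}_{i+1}=[y]_{\mu_{x}}$ with $\|y-x^{\ast}_{i+1}\|\leq\mu_{x}$, I would integrate inequality (ii) of Definition \ref{dGAS_Lyapunov} over $[0,\tau]$ to obtain the contraction $V(\mathbf{x}(\tau,x_{i},v),y)\leq e^{-\lambda\tau}V(x_{i},x^{\ast}_{i})$, and then absorb the quantization defect through (H2):
\begin{equation*}
V(x_{i+1},x^{\ast}_{i+1})\leq e^{-\lambda\tau}V(x_{i},x^{\ast}_{i})+\gamma(\mu_{x}).
\end{equation*}
The first term of (\ref{bisim_condition1}), namely $\gamma(\mu_{x})\leq(1-e^{-\lambda\tau})\underline{\alpha}(\varepsilon)$, renders the sublevel set $\{V\leq\underline{\alpha}(\varepsilon)\}$ forward invariant under this recursion. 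Starting from $V(x^{1}_{N_{1}},x^{1\ast}_{N_{1}})\leq\underline{\alpha}(\varepsilon)$ (supplied by $(x^{1},x^{1\ast})\in\mathcal{R}$) across the junction and iterating through all $N_{2}$ samples keeps every component within $\underline{\alpha}(\varepsilon)$, so $(x^{2},x^{2\ast})\in\mathcal{R}$. The inclusion for $\mathcal{R}^{-1}$ proceeds identically: clause (i) is trivial because $[X_{0}]_{\mu_{x}}\subseteq X_{0}$ permits the choice $x_{0}=x^{\ast}_{0}$ with $V=0$, and clause (iii) again matches $u$ and $N_{2}$ and appeals to the same recursion, which is symmetric in the two arguments of $V$.

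I expect the principal obstacle to be bookkeeping rather than analysis: one must propagate the length $N$ and the embedded inputs $u^{-},u^{+}$ through $\mathcal{R}$ so that the two systems always evolve under a common control and the junction linking consecutive extended states is handled consistently, and one must confirm that the quantizer image lands in the admissible set $[X]_{\mu_{x}}$, which is why $\mu_{x}\leq\hat{\mu}_{X}$ is imposed. The genuinely quantitative step — balancing the exponential contraction $e^{-\lambda\tau}$ against the per-sample error $\gamma(\mu_{x})$ — is precisely the content of (\ref{bisim_condition1}) and presents no difficulty once the recursion above is established.
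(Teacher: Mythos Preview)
Your proposal is correct and follows essentially the same approach as the paper: the same relation $\mathcal{R}$ (equal length, equal embedded inputs, componentwise $V$-sublevel bound $\underline{\alpha}(\varepsilon)$), the same verification of clauses (i)--(iii), the same per-sample recursion $V(x_{i+1},x^{\ast}_{i+1})\leq e^{-\lambda\tau}V(x_{i},x^{\ast}_{i})+\gamma(\mu_{x})$ obtained by integrating the Lyapunov decay and absorbing the quantization via (H2), and the same use of each term in (\ref{bisim_condition1}). The paper's treatment of $\mathcal{R}^{-1}$ likewise matches yours, taking $x_{0}=x^{\ast}_{0}$ for clause (i) and reusing the recursion for clause (iii).
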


%We also stress that condition (\ref{bisim_condition1}) only imposes constraints on the parameters $\tau$, $\mu_{x}$, and does not affect the other parameters in $\mathcal{C}_{\text{NCS}}$. The other parameters play a role in the control design of NCS.  

\begin{proof}
Consider the relation $\mathcal{R}\subseteq X_{\tau}\times X_{\ast}$ defined by $(x,x^{\ast})\in\mathcal{R}$ if and only if:
\begin{itemize}
\item $x=(x_{1},x_{2},...,x_{N})$, $x^{\ast}=(x^{\ast}_{1},x^{\ast}_{2},...,x^{\ast}_{N})$, for some $N\in [N_{\min};N_{\max}]$;
\item $V(x_{i},x_{i}^{\ast})\leq\underline{\alpha}(\varepsilon)$ for $i\in [1;N]$;
\item Eqns. (\ref{eq:states_1}), (\ref{eq:states_2}), (\ref{eq:symb_states_1}), (\ref{eq:symb_states_2}) hold for some $u^{-}=u^{-}_{\ast}$ and $u^{+}=u^{+}_{\ast}$.
\end{itemize}
In the following we prove that $S(\Sigma) \preceq_{\varepsilon}^{\alt} S_{\ast}(\Sigma)$, according to Definition \ref{ASR_S}. %the converse approximate similarity inclusion can be proven by following a similar reasoning. 
We first prove condition (i) of Definition \ref{ASR_S}. For any $x\in X_{0,\tau}$, choose $x^{\ast}\in X_{0,\ast}$ such that $x^{\ast}=[x]_{\mu_{x}}$, which implies that $\Vert x^{\ast}-x \Vert \leq \mu_{x}$. Hence, from condition (i) in Definition \ref{dGAS_Lyapunov} and the inequality in (\ref{bisim_condition1}) one gets:
\begin{equation}
V(x,x^{\ast})\leq \overline{\alpha}(\mu_{x}) \leq \overline{\alpha}(\overline{\alpha}^{-1}(\underline{\alpha}(\varepsilon))) = \underline{\alpha}(\varepsilon),
\label{eq:bound_distance}
\end{equation}
which concludes the proof of condition (i). We now consider condition (ii) of Definition \ref{ASR_S}. For any $(x,x^{\ast})\in\mathcal{R}$, from the definition of the metric given in (\ref{eq:metric}), the definition of $\mathcal{R}$ and condition (i) in Definition \ref{dGAS_Lyapunov}, one can write:
\begin{align}
d_{Y_{\tau}}(x,x^{\ast}) & =\max_{i}\Vert x_{i}-x^{\ast}_{i}\Vert \leq \max_{i} \underline{\alpha}^{-1}(V(x_{i},x_{i}^{\ast})) \nonumber\\
& \leq \underline{\alpha}^{-1}(\underline{\alpha}(\varepsilon)) = \varepsilon. \nonumber
\end{align}
Next we show that condition (iii) in Definition \ref{ASR_S} holds. Consider any $(x,x^{\ast})\in\mathcal{R}$, with $x=(x_{1},x_{2},...,x_{N})$, $x^{\ast}=(x^{\ast}_{1},x^{\ast}_{2},...,x^{\ast}_{N})$, for some $N\in [N_{\min};N_{\max}]$, and any $u\in U_{\tau}$; then pick $u_{\ast}=u\in U_{\ast}$. Now consider any $\bar{x}^{\ast}=(\bar{x}^{\ast}_1,\bar{x}^{\ast}_2,...,\bar{x}^{\ast}_{\bar{N}})\in \Post_{u_{\ast}}(x^{\ast}) \subseteq X_{\ast}$ with $\bar{x}^{\ast}_{\bar{N}}=[\mathbf{x}(\tau,\bar{x}^{\ast}_{\bar{N}-1},u_{\ast})]_{\mu_{x}}$, for some $\bar{N}\in [N_{\min};N_{\max}]$. Pick $\bar{x}=(\bar{x}_1,\bar{x}_2,...,\bar{x}_{\bar{N}})\in \Post_{u}(x) \subseteq X_{\tau}$ with $\bar{x}_{\bar{N}}=\mathbf{x}(\tau,\bar{x}_{\bar{N}-1},u)$ and define the state $\tilde{x}^{\ast}_1:=\mathbf{x}(\tau,x^{\ast}_{N},u_{\ast}^+)$. 
By Assumption (H1), condition (ii) in Definition \ref{dGAS_Lyapunov} writes:
\begin{equation}
\frac{\partial{V}}{\partial{x_{N}}} f(x_{N},u^+)+\frac{\partial{V}}{\partial{x_{N}^{\ast}}} f(x_{N}^{\ast},u_{\ast}^+)  \leq -\lambda V(x_{N},x_{N}^{\ast}). \\
\label{eq:ineq}
\end{equation}

By considering Assumption (H2), the definitions of $\mathcal{R}$, $S(\Sigma)$ and $ S_{\ast}(\Sigma)$, and by integrating the previous inequality, the following holds:
\begin{equation}
\label{eq_chain_of_ineq}
\begin{array}
{rcl}
V(\bar{x}_{1},\bar{x}^{\ast}_{1})  & \leq & V(\bar{x}_{1},\tilde{x}_1^{\ast}) +\gamma(\Vert{\tilde{x}_1^{\ast}-\bar{x}^{\ast}_{1}}\Vert) \\
%& \leq  V(w,z)+\gamma(\mu_{x}) \\
& \leq & e^{-\lambda \tau} V(x_N,x^{\ast}_N)+\gamma(\Vert{\tilde{x}_1^{\ast}-\bar{x}^{\ast}_{1}}\Vert) \\
& \leq & e^{-\lambda \tau} \underline{\alpha}(\varepsilon)+\gamma(\mu_{x}) \leq \underline{\alpha}(\varepsilon),
\end{array}
\end{equation}
where condition (\ref{bisim_condition1}) has been used in the last step. By similar computations, it is possible to prove by induction that $V(\bar{x}_{i},\bar{x}^{\ast}_{i})  \leq \underline{\alpha}(\varepsilon)$ implies $V(\bar{x}_{i+1},\bar{x}^{\ast}_{i+1})  \leq \underline{\alpha}(\varepsilon)$, for any $i\in [1;\bar{N}-2]$. The last step $i=\bar{N}-1$ requires the use of the input $u=u_{\ast}$ instead of $u^+=u_{\ast}^+$. By Assumption (H1) and defining $\tilde{x}^{\ast}_{\bar{N}}  :=\mathbf{x}(\tau,\bar{x}^{\ast}_{\bar{N}-1},u_{\ast})$, condition (ii) in Definition \ref{dGAS_Lyapunov} writes:
\begin{equation}
\frac{\partial{V}}{\partial{\bar{x}_{\bar{N}}}} f(\bar{x}_{\bar{N}},u)+\frac{\partial{V}}{\partial{\bar{x}^{\ast}_{\bar{N}}}} f(\bar{x}^{\ast}_{\bar{N}},u_{\ast})  \leq -\lambda V(\bar{x}_{\bar{N}},\bar{x}^{\ast}_{\bar{N}}). \\
\label{eq:ineq2}
\end{equation}
By considering Assumption (H2), the definitions of $\mathcal{R}$, $S(\Sigma)$ and $ S_{\ast}(\Sigma)$, and by integrating the previous inequality, the following holds:
\begin{equation}
\label{eq_chain_of_ineq_2}
\begin{array}
{rcl}
V(\bar{x}_{\bar{N}},\bar{x}^{\ast}_{\bar{N}}) & \leq & V(\bar{x}_{\bar{N}},\tilde{x}^{\ast}_{\bar{N}}) +\gamma(\Vert{\tilde{x}^{\ast}_{\bar{N}}-\bar{x}^{\ast}_{\bar{N}}}\Vert) \\
%& \leq  V(w,z)+\gamma(\mu_{x}) \\
& \leq & e^{-\lambda \tau} V(\bar{x}_{\bar{N}-1},\bar{x}^{\ast}_{\bar{N}-1})+\gamma(\Vert{\tilde{x}^{\ast}_{\bar{N}}-\bar{x}^{\ast}_{\bar{N}}}\Vert) \\
& \leq & e^{-\lambda \tau} \underline{\alpha}(\varepsilon)+\gamma(\mu_{x}) \leq \underline{\alpha}(\varepsilon).
\end{array}
\end{equation}
Hence the inequality $V(\bar{x}_{i},\bar{x}^{\ast}_{i})  \leq \underline{\alpha}(\varepsilon)$ has been proven for any $i\in [1;\bar{N}]$, implying $(\bar{x},\bar{x}^{\ast})\in\mathcal{R}$, which concludes the proof of condition (iii) of Definition \ref{ASR_S}.

We now consider the relation $\mathcal{R}^{-1}$ and we complete the prove by showing that $S_{\ast}(\Sigma) \preceq_{\varepsilon}^{\alt} S(\Sigma)$, according to Definition \ref{ASR_S};  we first prove condition (i) of Definition \ref{ASR_S}. For any $x^{\ast}\in X_{0,\ast}$, choose $x=x^{\ast}\in X_{0,\tau}$, which implies that $\Vert x^{\ast}-x \Vert = 0 \leq \mu_{x}$. Hence the inequality in (\ref{eq:bound_distance}) holds, which concludes the proof of condition (i). The proof of condition (ii) of Definition \ref{ASR_S} for the relation $\mathcal{R}^{-1}$ is the same as the one for the relation $\mathcal{R}$ and is not reported. Next we show that condition (iii) in Definition \ref{ASR_S} holds. Consider any $(x^{\ast},x)\in\mathcal{R}^{-1}$, with $x^{\ast}=(x^{\ast}_{1},x^{\ast}_{2},...,x^{\ast}_{N})$, $x=(x_{1},x_{2},...,x_{N})$, for some $N\in [N_{\min};N_{\max}]$, and any $u_{\ast}\in U_{\ast}$; then pick $u=u_{\ast}\in U_{\tau}$. Now consider any $\bar{x}=(\bar{x}_1,\bar{x}_2,...,\bar{x}_{\bar{N}})\in \Post_{u}(x) \subseteq X_{\tau}$ with $\bar{x}_{\bar{N}}=\mathbf{x}(\tau,\bar{x}_{\bar{N}-1},u)$, for some $\bar{N}\in [N_{\min};N_{\max}]$. Pick $\bar{x}^{\ast}=(\bar{x}^{\ast}_1,\bar{x}^{\ast}_2,...,\bar{x}^{\ast}_{\bar{N}})\in \Post_{u_{\ast}}(x^{\ast}) \subseteq X_{\ast}$ with $\bar{x}^{\ast}_{\bar{N}}=[\mathbf{x}(\tau,\bar{x}_{\bar{N}-1}^{\ast},u_{\ast})]_{\mu_x}$ and define the state $\tilde{x}^{\ast}_1:=\mathbf{x}(\tau,x^{\ast}_{N},u_{\ast}^+)$. After that, it is possible to rewrite exactly the same steps as in the proof of condition (iii) for $\mathcal{R}$, in particular Eqns. (\ref{eq:ineq})--(\ref{eq_chain_of_ineq_2}), implying that $V(\bar{x}_{i},\bar{x}^{\ast}_{i})  \leq \underline{\alpha}(\varepsilon)$ for any $i\in [1;\bar{N}]$; as a consequence $(\bar{x},\bar{x}^{\ast})\in\mathcal{R}$, hence one gets $(\bar{x}^{\ast},\bar{x})\in\mathcal{R}^{-1}$, concluding the proof.
\end{proof}

\begin{remark}
The symbolic models proposed in this section follow the work in \cite{PolaAutom2008,PolaSIAM2009,PolaSCL10,PolaTAC12,MajidTAC11}. In particular, the results of \cite{PolaSCL10} deal with symbolic models for nonlinear time--delay systems. We note that such results are not of help in the construction of symbolic models for NCS because they do not consider time--varying delays in the control input signals, which is one of the key features in NCS. 
\end{remark}

\section{Symbolic control design}\label{sec:control}
We consider a control design problem where the NCS $\Sigma$ has to satisfy a given specification robustly with respect to the non--idealities of the communication network. \\
The class of specifications that we consider is expressed by the (non--deterministic) transition system \cite{ModelChecking}: 
\begin{equation}
\mathcal{Q}=(X_{q},X_{q}^{0},\rTo_{q}),
\label{spec}
\end{equation}
where $X_{q}$ is a finite subset of $\mathbb{R}^{n}$, $X_{q}^{0}\subseteq X_{q}$ is the set of initial states and  $\rTo_{q}\subseteq X_{q}\times X_{q}$ is the transition relation. We suppose that $\mathcal{Q}$ is accessible, i.e. for any state $x\in X_{q}$ there exists a finite path from an initial condition $x_{0}\in X_{q}^{0}$ to $x$, i.e.
\[
x_{0} \rTo_{q} x_{1} \rTo_{q} x_{2} \rTo_{q} \,...\, \rTo_{q} x.
\]
Moreover we suppose that $\mathcal{Q}$ is non--blocking, i.e. for any $x\in X_{q}$ there exists $x'\in X_{q}$ such that $x \rTo_{q} x'$. 
For the subsequent developments we now reformulate the specification $\mathcal{Q}$ in the form of a system as in (\ref{def:system}), as follows:
\begin{equation}
Q^e=(X^e_q,X^{e,0}_q,U_q,\rTo_{e,q},Y^e_q,H^e_q),
\label{ext_spec}
\end{equation}
defined as follows:

\begin{itemize}
\item $X^e_q$ is the subset of $ X^0_q \cup \left( \bigcup_{N\in [N_{\min};N_{\max}]} X_q^{N} \right)$ such that for any $x=(x_{1},x_{2},...,x_{N})\in X^e_q$, with $N\in [N_{\min};N_{\max}]$, for any $i\in [1;N-1]$, the transition $x_{i} \rTo_q x_{i+1}$ is in $\mathcal{Q}$;

\item $X^{e,0}_q=X^0_q$;

\item $U_q=\{ \bar{u}_q \}$, where $\bar{u}_q$ is a \emph{dummy} symbol;

\item $x^{1}\rTo_{e,q}^{\bar{u}_q} x^{2}$, where:
\[
\begin{array}
{l}
\left\{
\begin{array}
{llll}
x^{1} & =(x^{1}_{1},x^{1}_{2},...,x^{1}_{N_{1}}), &  N_1 \in [N_{\min};N_{\max}];\\
x^{2} & =(x^{2}_{1},x^{2}_{2},...,x^{2}_{N_{1}}), &  N_2 \in [N_{\min};N_{\max}],
%x_{i+1}^{1} & = & \mathbf{x}(\tau,x_{i}^{1},u^{-}_{1}), & i\in [1;N_{1}-2];\\
%x_{N_{1}}^{1} & = & \mathbf{x}(\tau,x^1_{N_{1}-1},u^{+}_{1}); &
\end{array}
\right.
\end{array}
\]
and the transition $x^1_{N_1} \rTo_q x^2_{1}$ is in $\mathcal{Q}$;

\item $Y^e_q=X^e_q$;

\item $H^e_q=1_{X^e_q}$,

\end{itemize}

where $N_{\min}$ and $N_{\max}$ are as in (\ref{eq:N_min}) and (\ref{eq:N_max}). In order to cope with non-determinism in the communication network, symbolic controllers need to be robust in the sense of the following definition.
\begin{definition}
Given a system 
\[
S=(X_S,X_{S,0},U_S,\rTo_S,Y_S,H_S),
\]
a symbolic controller 
\[
C=(X_C,X_{C,0},U_C,\rTo_C,Y_C,H_C),
\]
is said to be \emph{robust with respect to $S$ with composition parameter $\theta\in\mathbb{R}^{+}$} if for any $u_s\in U_S$ and for each pair of transitions $x_s \rTo_S^{u_s} x_s^{\prime}$ and $x_s \rTo_S^{u_s} x_s^{\prime\prime}$ in $S$, with $x_s^{\prime}\neq x_s^{\prime\prime}$, the existence of a transition $(x_s,x_c)\rTo^{(u_s,u_c)} (x_s^{\prime},x_c^{\prime})$ in $S \Vert_{\theta} C$, for some $x_c,x_c^{\prime}\in X_C$, implies the existence of a transition $(x_s,x_c)\rTo^{(u_s,u_c)} (x_s^{\prime\prime},x_c^{\prime\prime})$ in $S \Vert_{\theta} C$ for some $x_c^{\prime\prime}\in X_C$.
\label{def:rob_controller}
\end{definition}

We are now ready to state the control problem that we address in this section.

\begin{problem}
\label{problem}
Consider the NCS $\Sigma$, the specification $Q^e$ in (\ref{ext_spec}) and a desired precision $\varepsilon\in\mathbb{R}^{+}$. Find a parameter $\theta\in\mathbb{R}^{+}$ and a symbolic controller $C$ such that:
\begin{itemize}
\item [(1)] $C$ is robust with respect to $S(\Sigma)$ with composition parameter $\theta$;
\item [(2)] $S(\Sigma)\Vert_{\theta} C  \preceq_{\varepsilon} Q^e$;
\item [(3)] $S(\Sigma)\Vert_{\theta} C$ is non--blocking.
\end{itemize}
\end{problem}

Condition (1) of Problem \ref{problem} is posed to cope with the non-determinism of $S(\Sigma)$. The approximate similarity inclusion in (2) requires the state trajectories of the NCS to be close to the ones of specification $Q^e$ up to the accuracy $\varepsilon$. The non-blocking condition (3) prevents deadlocks in the interaction between the plant and the controller. 

In the following definition, we provide the controller $C^{\ast}$ that will be shown to solve Problem \ref{problem}.

\begin{definition}
The symbolic controller $C^{\ast}$ is the maximal sub--system\footnote{Here maximality is defined with respect to the preorder induced by the notion of sub--system.} $C$ of $S_{\ast}(\Sigma) \Vert_{\mu_{x}} Q^e$ that satisfies the following properties:
\begin{itemize}
\item $C$ is non--blocking;
\item for any $u_{\ast}\in U_{\ast}$ and for each pair of transitions $x \rTo_{\ast}^{u_{\ast}} x^{\prime}$ and $x \rTo_{\ast}^{u_{\ast}} x^{\prime\prime}$ in $S_{\ast}(\Sigma)$, with $x^{\prime}\neq x^{\prime\prime}$, the existence of a transition $(x,x_q)\rTo^{(u_{\ast},\bar{u}_q)} (x^{\prime},x_q^{\prime})$ in $C$, for some $x_q$, $x_q^{\prime}$, implies the existence of a transition $(x,x_q)\rTo^{(u_{\ast},\bar{u}_q)}$ $(x^{\prime\prime},x_q^{\prime\prime})$ in $C$, for some $x_q^{\prime\prime}$.
\end{itemize}
\label{canon_contr}
\end{definition}

The following technical result will be useful in the sequel.

\begin{lemma}
Let \mbox{$S_{i}=(X_{i},X_{0,i},U_{i},\rTo_{i},Y_{i},H_{i})$} ($i=1$, $2$, $3$) be metric systems with the same output sets $Y_{1}=Y_{2}=Y_{3}$ and metric $d$. Then the following statements hold:
\begin{itemize}
\item[(i)] \cite{AB-TAC07} for any $\varepsilon_{1}\leq \varepsilon_{2}$, $S_{1}\preceq_{\varepsilon_{1}}S_{2}$ implies \mbox{$S_{1}\preceq_{\varepsilon_{2}}S_{2}$};
\item[(ii)] \cite{AB-TAC07} if $S_{1}\preceq_{\varepsilon_{12}}S_{2}$ and $S_{2}\preceq_{\varepsilon_{23}}S_{3}$ then $S_{1}\preceq_{\varepsilon_{12}+\varepsilon_{23}}S_{3}$;
\item[(iii)] \cite{PolaTAC12} for any $\theta\in\mathbb{R}^{+}_{0}$, $S_{1} \Vert_{\theta} S_{2} \preceq_{\theta} S_{2}$.
\end{itemize}
\label{lemma1}
\end{lemma}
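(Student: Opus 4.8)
The plan is to treat all three items as standard closure properties of one-directional approximate simulation, exhibiting an explicit witnessing relation in each case and checking the three conditions of Definition \ref{ASR} directly. The unifying observation for items (i) and (ii) is that conditions (i) and (iii) of Definition \ref{ASR} are entirely precision-independent — they only assert the existence of a related initial state and of a matching successor — so only condition (ii) ever interacts with $\varepsilon$.

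For item (i), I would simply reuse the very relation $\mathcal{R}$ that witnesses $S_1\preceq_{\varepsilon_1}S_2$ and verify it also witnesses $S_1\preceq_{\varepsilon_2}S_2$. Conditions (i) and (iii) carry over verbatim, and the only point to check is condition (ii), where $d(H_1(x_1),H_2(x_2))\leq\varepsilon_1\leq\varepsilon_2$ gives the required bound at once.

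For item (ii), I would take the relational composition $\mathcal{R}_{13}:=\{(x_1,x_3)\ :\ \exists\, x_2\in X_2,\ (x_1,x_2)\in\mathcal{R}_{12}\text{ and }(x_2,x_3)\in\mathcal{R}_{23}\}$ of the given relations witnessing $S_1\preceq_{\varepsilon_{12}}S_2$ and $S_2\preceq_{\varepsilon_{23}}S_3$. Condition (i) follows by chaining the two initial-state conditions; condition (ii) is exactly where the precisions accumulate, via the triangle inequality for $d$, namely $d(H_1(x_1),H_3(x_3))\leq d(H_1(x_1),H_2(x_2))+d(H_2(x_2),H_3(x_3))\leq\varepsilon_{12}+\varepsilon_{23}$; and condition (iii) follows by lifting a transition of $S_1$ first through $\mathcal{R}_{12}$ to a matching transition of $S_2$, then through $\mathcal{R}_{23}$ to a matching transition of $S_3$, keeping the final successors $\mathcal{R}_{13}$-related.

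For item (iii), I would use the explicit componentwise description of $S_1\Vert_{\theta}S_2$ in Definition \ref{composition} and take the projection relation $\mathcal{R}:=\{((x_1,x_2),x_2)\ :\ (x_1,x_2)\in X\}$, with $X$ the state set of the composition. Conditions (i) and (iii) are immediate from the definitions: initial states of the composition project onto $X_{0,2}$, and the second component of every composite transition is by construction a transition of $S_2$. The heart of the argument — and the only place $\theta$ enters — is condition (ii): since $H(x_1,x_2)=H_1(x_1)$ and the state set of the composition contains only pairs with $d(H_1(x_1),H_2(x_2))\leq\theta$, one gets $d(H(x_1,x_2),H_2(x_2))=d(H_1(x_1),H_2(x_2))\leq\theta$ directly. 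None of the three items poses a real obstacle; the most error-prone step is the transition-chaining in condition (iii) of item (ii), where I must invoke the two relations in the correct order so that the intermediate state in $S_2$ is the same one produced by the first lift and consumed by the second.
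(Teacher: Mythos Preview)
Your proposal is correct: each of the three witnessing relations you choose (the same $\mathcal{R}$ for~(i), the relational composition for~(ii), and the projection onto the second component for~(iii)) is the standard one, and your verification of the three conditions of Definition~\ref{ASR} is accurate in each case. The paper itself does not prove this lemma at all; it simply imports the three statements from the cited references \cite{AB-TAC07} and \cite{PolaTAC12}, so there is no ``paper's own proof'' to compare against.
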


We are now ready to show that the controller $C^{\ast}$ solves Problem \ref{problem}. 

\begin{theorem}
Consider the NCS $\Sigma$ and the specification $Q^e$. Suppose that the control system $P$ in $\Sigma$ enjoys Assumptions (H1) and (H2) in Theorem \ref{polaut}. Then for any desired precision $\varepsilon\in\mathbb{R}^{+}$ and for any $\theta,\mu_{x}\in\mathbb{R}^{+}$ such that:
\begin{align}
& \mu_{x}+\theta  \leq\varepsilon,\label{solving_condition_1}\\
& \mu_{x} \leq \min \left \{\gamma^{-1} \left ( \left ( 1-e^{-\lambda \tau} \right ) \underline{\alpha}(\theta)  \right ) , \overline{\alpha}^{-1}(\underline{\alpha}(\theta)), \hat{\mu}_{X} \right \}\text{,}\label{solving_condition_2}
\end{align}
the symbolic controller $C^{\ast}$ solves Problem \ref{problem}.
\label{Mmain}
\end{theorem}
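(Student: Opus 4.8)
The plan is to lean on Theorem \ref{polaut}: condition (\ref{solving_condition_2}) is exactly the bisimulation condition (\ref{bisim_condition1}) with $\theta$ written in place of $\varepsilon$, so that theorem already gives $S(\Sigma)\cong^{\alt}_{\theta}S_{\ast}(\Sigma)$. I would fix the relation $\mathcal{R}\subseteq X_{\tau}\times X_{\ast}$ exhibited in the proof of Theorem \ref{polaut} (with $\varepsilon$ replaced by $\theta$) and keep in mind its three structural features: $\mathcal{R}$-related states share the same extended length $N$, they satisfy $V(x_{i},x^{\ast}_{i})\leq\underline{\alpha}(\theta)$ componentwise, and the alternating successor matching can always be carried out with $u_{\ast}=u$. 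These facts are the engine of the whole argument.

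Condition (2) is the routine part and is obtained purely from Lemma \ref{lemma1}. Since $C^{\ast}\sqsubseteq S_{\ast}(\Sigma)\Vert_{\mu_{x}}Q^{e}$, the inclusion map is a $0$--approximate simulation, so $C^{\ast}\preceq_{0}S_{\ast}(\Sigma)\Vert_{\mu_{x}}Q^{e}$; combining this with $S_{\ast}(\Sigma)\Vert_{\mu_{x}}Q^{e}\preceq_{\mu_{x}}Q^{e}$ from Lemma \ref{lemma1}(iii) and transitivity (Lemma \ref{lemma1}(ii)) yields $C^{\ast}\preceq_{\mu_{x}}Q^{e}$. Applying Lemma \ref{lemma1}(iii) to the outer composition gives $S(\Sigma)\Vert_{\theta}C^{\ast}\preceq_{\theta}C^{\ast}$, and one more use of transitivity produces $S(\Sigma)\Vert_{\theta}C^{\ast}\preceq_{\theta+\mu_{x}}Q^{e}$. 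Condition (\ref{solving_condition_1}) gives $\theta+\mu_{x}\leq\varepsilon$, so Lemma \ref{lemma1}(i) upgrades this to the required $S(\Sigma)\Vert_{\theta}C^{\ast}\preceq_{\varepsilon}Q^{e}$.

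The substance is in conditions (1) and (3), where the central device is an invariant: every reachable state $(x,(x^{\ast},x_{q}))$ of $S(\Sigma)\Vert_{\theta}C^{\ast}$ satisfies $(x,x^{\ast})\in\mathcal{R}$. I would prove this by induction on runs, the base case being that initial states pick $x^{\ast}=[x]_{\mu_{x}}$, whence $V(x,x^{\ast})\leq\overline{\alpha}(\mu_{x})\leq\underline{\alpha}(\theta)$ by (\ref{solving_condition_2}); the inductive step reuses verbatim the chain-of-inequalities computation from the proof of Theorem \ref{polaut}, i.e. the analogues of (\ref{eq_chain_of_ineq}) and (\ref{eq_chain_of_ineq_2}), which show any matched successor remains $\mathcal{R}$-related. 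With this invariant I can transport robustness from $S_{\ast}(\Sigma)$ to $S(\Sigma)$: given two distinct successors $x',x''\in\Post_{u}(x)$ in $S(\Sigma)$ and a composition transition realizing $x'$, the alternating relation (with $u_{\ast}=u$ and matching lengths) produces symbolic successors $x^{\ast\prime},x^{\ast\prime\prime}\in\Post_{u}(x^{\ast})$ that are $\mathcal{R}$-related to $x'$ and $x''$; the defining robustness of $C^{\ast}$ with respect to $S_{\ast}(\Sigma)$ then forces the transition to $x^{\ast\prime\prime}$ to be permitted, and since $(x'',x^{\ast\prime\prime})\in\mathcal{R}$ gives $d_{Y_{\tau}}(x'',x^{\ast\prime\prime})\leq\theta$, the corresponding composition transition to $x''$ exists, establishing (1). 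Condition (3) follows by the same mechanism: from a reachable $(x,(x^{\ast},x_{q}))$, non-blockingness of $C^{\ast}$ supplies an allowed symbolic successor and non-blockingness of $S(\Sigma)$ (immediate from its construction, since every $N_{2}\in[N_{\min};N_{\max}]$ yields a successor) supplies a plant successor under the matched input, which the alternating relation and robustness of $C^{\ast}$ promote to an outgoing composition transition.

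I expect the main obstacle to be the bookkeeping in (1) and (3): one must simultaneously control the input through the $u_{\ast}=u$ choice, the extended-state length $N$ (the sole source of non-determinism in both $S(\Sigma)$ and $S_{\ast}(\Sigma)$), and the Lyapunov invariant, and then thread all three through Definition \ref{def:rob_controller}. The delicate point is that mere output closeness $d_{Y_{\tau}}\leq\theta$ does \emph{not} by itself place a pair in $\mathcal{R}$, since the stronger bound $V\leq\underline{\alpha}(\theta)$ is required; the reachability/invariant argument is therefore indispensable rather than cosmetic, and this is where I would concentrate the effort.
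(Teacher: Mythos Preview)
Your treatment of condition (2) coincides with the paper's verbatim. For (1) and (3), however, the paper does \emph{not} set up a reachability invariant: it simply takes an arbitrary state $(x,x_{\ast},x_q)$ of $S(\Sigma)\Vert_{\theta}C^{\ast}$ and invokes Theorem~\ref{polaut} (i.e.\ $S(\Sigma)\cong_{\theta}^{\alt}S_{\ast}(\Sigma)$) directly, together with ``the definition of $S(\Sigma)$ and $S_{\ast}(\Sigma)$'', to produce a matching transition in the other system within distance $\theta$, and then pipes this through Definition~\ref{canon_contr} for (1) and through non-blockingness of $C^{\ast}$ for (3). Your instinct that mere $d_{Y_{\tau}}\leq\theta$ does not literally place $(x,x^{\ast})$ in the specific relation $\mathcal{R}$ of Theorem~\ref{polaut} is correct, so your invariant is the more scrupulous route; the paper essentially absorbs this point into a single sentence.

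Two caveats on your version. First, your base case is wrong as stated: initial composition states are \emph{not} forced to satisfy $x^{\ast}=[x]_{\mu_{x}}$; by Definition~\ref{composition} any pair with $x\in X_{0,\tau}$, $(x^{\ast},x_{q})\in X_{0,C^{\ast}}$ and $d_{Y_{\tau}}(x,x^{\ast})\leq\theta$ is admitted, so you only know $\Vert x-x^{\ast}\Vert\leq\theta$, hence $V(x,x^{\ast})\leq\overline{\alpha}(\theta)$, which need not be $\leq\underline{\alpha}(\theta)$. Second, Definition~\ref{def:rob_controller} and the non-blocking property both quantify over \emph{all} composition states, not only reachable ones, so an invariant proved along runs does not literally discharge (1) and (3) as stated in Problem~\ref{problem}. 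The paper sidesteps both issues by arguing at the level of an arbitrary composition state and appealing to the bisimilarity as a black box; if you want to keep your Lyapunov-level bookkeeping, you would need to argue it for arbitrary composition states (not just reachable ones), which brings you back to the very difficulty you flagged.
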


\begin{proof}
First we prove condition (1) of Problem \ref{problem}. Consider any $u\in U_{\tau}$, any state $x\in X_{\tau}$, and any pair of transitions $x \rTo^{u}_{\tau} x^{\prime}$ and $x \rTo^{u}_{\tau} x^{\prime\prime}$ in $S(\Sigma)$, with $x^{\prime}\neq x^{\prime\prime}$. Consider any transition $(x,x_c)\rTo^{(u,u_c)}(x^{\prime},x^{\prime}_c)$ in $S(\Sigma)\Vert_{\theta} C^{\ast}$, where $x_c=(x_{\ast},x_q)$, $x^{\prime}_c=(x^{\prime}_{\ast},x^{\prime}_q)$, $u_c=(u_{\ast},\bar{u}_q)$, since $C^{\ast}\sqsubseteq S_{\ast}(\Sigma) \Vert_{\mu_{x}} Q^e$. Note that the transition $x_c\rTo^{u_c}x^{\prime}_c$ (equivalently $(x_{\ast},x_q)\rTo^{(u_{\ast},\bar{u}_q)}(x^{\prime}_{\ast},x^{\prime}_q)$) is in $C^{\ast}$ by Definition \ref{composition}. By definition of $S(\Sigma)$ and $S_{\ast}(\Sigma)$ and in view of condition (\ref{solving_condition_2}) and Assumptions (H1)-(H2) in Theorem \ref{polaut}, ensuring that $S(\Sigma)\cong_{\theta}^{\alt}S_{\ast}(\Sigma)$, the existence of a transition $x \rTo^{u}_{\tau} x^{\prime\prime}$ in $S(\Sigma)$ implies the existence of a transition $x_{\ast} \rTo^{u_{\ast}}_{\ast} x_{\ast}^{\prime\prime}$ in $S_{\ast}(\Sigma)$ s.t. $d_{Y_{\tau}}(x^{\prime\prime},x_{\ast}^{\prime\prime})\leq \theta$, with $x_{\ast}^{\prime\prime} \neq x_{\ast}^{\prime}$, in general. Furthermore, by Definition \ref{canon_contr}, the existence of the transitions $x_{\ast} \rTo^{u_{\ast}}_{\ast} x_{\ast}^{\prime}$ and $x_{\ast} \rTo^{u_{\ast}}_{\ast} x_{\ast}^{\prime\prime}$ in $S_{\ast}(\Sigma)$ and of the transition $(x_{\ast},x_q)\rTo^{(u_{\ast},\bar{u}_q)}(x^{\prime}_{\ast},x^{\prime}_q)$ in $C^{\ast}$ implies the existence of a transition $(x_{\ast},x_q)\rTo^{(u_{\ast},\bar{u}_q)}(x^{\prime\prime}_{\ast},x^{\prime\prime}_q)$ in $C^{\ast}$ for some $x^{\prime\prime}_q$. Since $d_{Y_{\tau}}(x^{\prime\prime},x_{\ast}^{\prime\prime})\leq \theta$, the transition $(x,x_c)\rTo^{(u,u_c)}(x^{\prime\prime},x^{\prime\prime}_c)$, with $x^{\prime\prime}_c=(x^{\prime\prime}_{\ast},x^{\prime\prime}_q)$,  is in $S(\Sigma)\Vert_{\theta} C^{\ast}$, which concludes the proof of condition (1) of Problem \ref{problem}.
% We do not provide the proofs of conditions (2) and (3) for lack of space. We only mention that condition (2) follows directly from (\ref{solving_condition_1}) and Proposition \ref{prop:similarity}, and condition (3) is ensured from (\ref{solving_condition_2}) and Theorem \ref{polaut}.

We now show that condition (2) of Problem \ref{problem} is fulfilled. By Lemma \ref{lemma1} (iii), $S(\Sigma)\Vert_{\theta} C^{\ast}\preceq_{\theta} C^{\ast}$ and $S_{\ast}(\Sigma)\Vert_{\mu_x} Q^e \preceq_{\mu_x} Q^e$. Since $C^{\ast}$ is a sub--system of $S_{\ast}(\Sigma) \Vert_{\mu_{x}} Q^e$ then $C^{\ast}\preceq_{0} S_{\ast}(\Sigma) \Vert_{\mu_{x}} Q^e$. By Lemma \ref{lemma1} (i)-(ii), and from (\ref{solving_condition_1}), $\mu_{x}+\theta  \leq\varepsilon$, the above approximate similarity inclusions imply $S(\Sigma)\Vert_{\theta} C^{\ast} \preceq_{\varepsilon} Q^e$, which concludes the proof of condition (2) of Problem \ref{problem}. 

We finally show that also condition (3) holds. Consider any state $(x,x_{\ast},x_q)$ of $S(\Sigma)\Vert_{\theta} C^{\ast}$. Since $C^{\ast}$ is non--blocking, for the state $(x_{\ast},x_q)$ of $C^{\ast}$ there exists a state $(x^{\prime}_{\ast},x^{\prime}_q)$ of $C^{\ast}$ such that $(x_{\ast},x_q)\rTo^{(u_{\ast},\bar{u}_q)}(x^{\prime}_{\ast},x^{\prime}_q)$ is a transition of $C^{\ast}$ for some $(u_{\ast},\bar{u}_q)$. Since by the inequality in (\ref{solving_condition_2}) and Theorem \ref{polaut}, $S(\Sigma)$ and $S_{\ast}(\Sigma)$ are $A \theta A$--bisimilar, for the transition $x_{\ast}\rTo^{u_{\ast}} x_{\ast}^{\prime}$ in $S_{\ast}(\Sigma)$ there exists a transition $x\rTo^{u} x^{\prime}$ in $S(\Sigma)$ such that $d_{Y_{\tau}}(x^{\prime},x_{\ast}^{\prime})\leq \theta$. This implies from Definition \ref{composition} that $(x^{\prime},x^{\prime}_{\ast},x^{\prime}_q)$ is a state of $S(\Sigma)\Vert_{\theta} C^{\ast}$ and therefore that \\ $(x,x_{\ast},x_q)\rTo^{(u,u_{\ast},\bar{u}_q)}(x^{\prime},x^{\prime}_{\ast},x^{\prime}_q)$ is a transition of $S(\Sigma)\Vert_{\theta} C^{\ast}$, which concludes the proof.

\end{proof}

\section{An illustrative example}\label{sec:example}
We consider a pair of nonlinear control systems $P_a$ and $P_b$ described by the following differential equations:  %\dot{x^a}(t)=f_a(x^a(t),u^a(t)),
\begin{align}
\dot{x} &=\left[
\begin{array}
[l]{l}
\dot{x}_1\\
\dot{x}_2
\end{array}
\right] =&f(x,u) &=
\left[
\begin{array}
[l]{l}
x_2\\
-5 \sin(x_1)-4 x_2+ u
\end{array}
\right],\label{example1}\\
\dot{z} &=\left[
\begin{array}
[l]{l}
\dot{z}_1\\
\dot{z}_2
\end{array}
\right] =&g(z,v) &=
\left[
\begin{array}
[l]{l}
-2.5 z_1+z_2^2\\
2 z_1-6 e^{z_2}+ v + 6
\end{array}
\right],
\label{example2}
\end{align}

where $x\in X=X_{0} =\left[-\frac{\pi}{3},\frac{\pi}{3}\right[\times [-1,1[$, $u\in U =[-5,5]$, $z\in Z=Z_{0} =[-1,1[\times [-1,1[$ and $v\in V  =[-5,5]$. The two plants that are denoted by $\Sigma_a$ and $\Sigma_b$, form a pair of NCS loops as the one depicted in Figure \ref{NCSpic}. 
The two controllers are supposed to run on a shared CPU that is able to control both processes. The shared network/computation parameters are $B_{\max} =1 \text{ kbit}/s$, $\tau =0.2 s$, $\Delta_{\min}^{\ctrl}=0.001 s$, $\Delta_{\max}^{\ctrl} =0.01 s$ and $\Delta^{\req}_{\max} =0.1 s$.
The output quantization is chosen to be equal to $\mu_{x}=2\cdot 10^{-4}$ for both the NCS, while we set a different input quantization: $\mu_{u} =0.0024$ for $\Sigma_a$ and $\mu_{u} =2\cdot 10^{-4}$ for $\Sigma_b$. We assume that $P_b$ is farther away than $P_a$ (in terms of hops in the network topology) from the shared CPU, resulting in larger delays; in particular, we set $\Delta_{\min}^{\delay,a}=0.05 s$, $\Delta_{\max}^{\delay,a}=0.12 s$ for $\Sigma_a$ and $\Delta_{\min}^{\delay,b}=0.1 s$, $\Delta_{\max}^{\delay,b}=0.24 s$ for $\Sigma_b$. As from Eqns. (\ref{eq:N_min})-(\ref{eq:delta_max}), this results in $N^a_{\min}=1$, $N^a_{\max}=3$ for $\Sigma_a$, and $N^b_{\min}=2$, $N^b_{\max}=4$ for $\Sigma_b$.
We consider the following common quadratic Lyapunov function:
\begin{align*}
V(y,y^{\prime})=\frac{1}{2}\Vert y-y^{\prime}\Vert_{2}^{2},
\end{align*}
satisfying condition (i) of Definition \ref{dGAS_Lyapunov} with $\underline{\alpha}(r)=0.5\, r^2$ and $\overline{\alpha}(r)=r^2$, $r\in\mathbb{R}^{+}_{0}$. Furthermore, for the first control system $P_a$, one can write:
\begin{align*}
\frac{\partial V}{\partial x}f(x,u)
+\frac{\partial V}{\partial x^{\prime}}f(x^{\prime},u) & =(x-x^{\prime})^{T}(f(x,u)-f(x^{\prime},u))  =\\
&  \leq -0.75V(x,x^{\prime}).
\end{align*}
Condition (ii) of Definition \ref{dGAS_Lyapunov} is therefore fulfilled for $P_a$ with $\lambda_a=0.75$. Analogous computation for $P_b$ leads to $\lambda_b=0.2$. Hence, by Theorem \ref{TH-IGAS}, control systems (\ref{example1}) and (\ref{example2}) are $\delta$--GAS. 
In order to construct symbolic models for $\Sigma_a$ and $\Sigma_b$, we apply Theorem \ref{thmain}. Assumption (H1) holds by the incremental stability property proven above. Assumption (H2) of Theorem \ref{TH-IGAS} holds with $\gamma(r)=2.09r$ for $P_a$ and $\gamma(r)=2r$ for $P_b$. Finally, for a precision $\varepsilon_a=\pi/20$ and $\varepsilon_b=0.2$ for $\Sigma_a$ and $\Sigma_b$, respectively, the inequality in (\ref{bisim_condition1}) holds. Hence, we can construct symbolic models for $S_{\ast}(\Sigma_a)$ and $S_{\ast}(\Sigma_b)$ that are $A\varepsilon_a A$ bisimilar and $A\varepsilon_b A$ bisimilar to $S(\Sigma_a)$ and $S(\Sigma_b)$. For $S_{\ast}(\Sigma_a)$, the resulting number of states is $1.8\cdot 10^{22}$ and the number of control inputs is $2,049$; $S_{\ast}(\Sigma_b)$ instead contains $3.91\cdot 10^{29}$ states and $16,385$ control inputs.
Due to the large size of the symbolic models obtained, further details are not included here. 
We now use the results in Section \ref{sec:control} to solve trajectory tracking problems (on a finite time horizon), expressed in the form of Problem \ref{problem}. We consider specifications expressed in the form of transition systems $\mathcal{Q}_a$ and $\mathcal{Q}_b$, as in (\ref{spec}). The specification $\mathcal{Q}_a$ is given by the following trajectory on the first state variable:
\begin{equation*}
\begin{array}
{l}
0.5 \rTo 0.4 \rTo 0.3 \rTo 0.2 \rTo 0.1 \rTo \\
0 \rTo -0.2 \rTo -0.35 \rTo -0.5 \rTo \\
-0.6\rTo -0.7\rTo -0.8 \rTo -0.8\rTo \\
-0.75\rTo -0.7,
\end{array}
\label{spec1}
\end{equation*}
while the specification $\mathcal{Q}_b$ is given by the following trajectory:
\begin{equation*}
\begin{array}
{l}
(0.5,0.5) \rTo(0.4,0.3) \rTo (0.3,0.2)\rTo (0.2,0.1)\rTo \\
(0.1,-0.1)\rTo(0,-0.25) \rTo(-0.1,-0.3)\rTo \\
(-0.1,-0.4)\rTo(-0.15,-0.4)\rTo (-0.15,-0.4)\rTo\\
(0.1,-0.3)\rTo(0.2,-0.2)\rTo(0.2,-0.1)\rTo\\
(0.2,-0.1)\rTo(0.2,-0.05).
\end{array}
\label{spec2}
\end{equation*}
For the choice of the interconnection parameter $\theta_a=0.9 \varepsilon_a$ and $\theta_b=0.9 \varepsilon_b$, for the two NCS loops, Theorem \ref{Mmain} holds and a controller $C^{\ast}$ as from Definition (\ref{canon_contr}) solves the control problem. Since the symbolic models of $\Sigma_a$ and $\Sigma_b$ have large size, a straightforward application of the results reported in the previous section for the design of the requested symbolic controllers would exhibit a large space and time computational complexity. For this reason in this example we adapt to NCS the algorithms proposed in \cite{PolaTAC12} concerning the integrated symbolic control design of nonlinear control systems. More precisely, instead of first computing the symbolic models of the plants to then derive the symbolic controllers, we integrate the design of the symbolic controllers with the construction of the symbolic models. By using this approach we designed the requested symbolic controllers in $2,039 s$ with a total memory occupation of $25,239$ integers; this computation has been performed on the Matlab suite through an Intel Core 2 Duo T5500 1.66GHz laptop with 4 GB RAM. 
The synthesized controllers has been validated through the OMNeT++ network simulation framework \cite{Omnet}. Communication delays are managed in OMNeT++ by means of a variable number of hops for each message and random delays over each network hop. We set a delay over the single hop variable between $0.0125s$ and $0.02 s$, and a number of network hops between $4$ and $6$ for $\Sigma_a$ and between $8$ and $12$ for $\Sigma_b$. Figure \ref{fig:omnet_scheme} shows the OMNeT++ implementation of the two-loop network scheme with shared CPU.
In Figures \ref{fig:sim1} and \ref{fig:sim2}, we show the simulation results for the tracking problems considered, for a particular realization of the network uncertainties: it is easy to see that the specifications are indeed met.

\begin{figure}[ht]
\begin{center}
\includegraphics[scale=0.1]{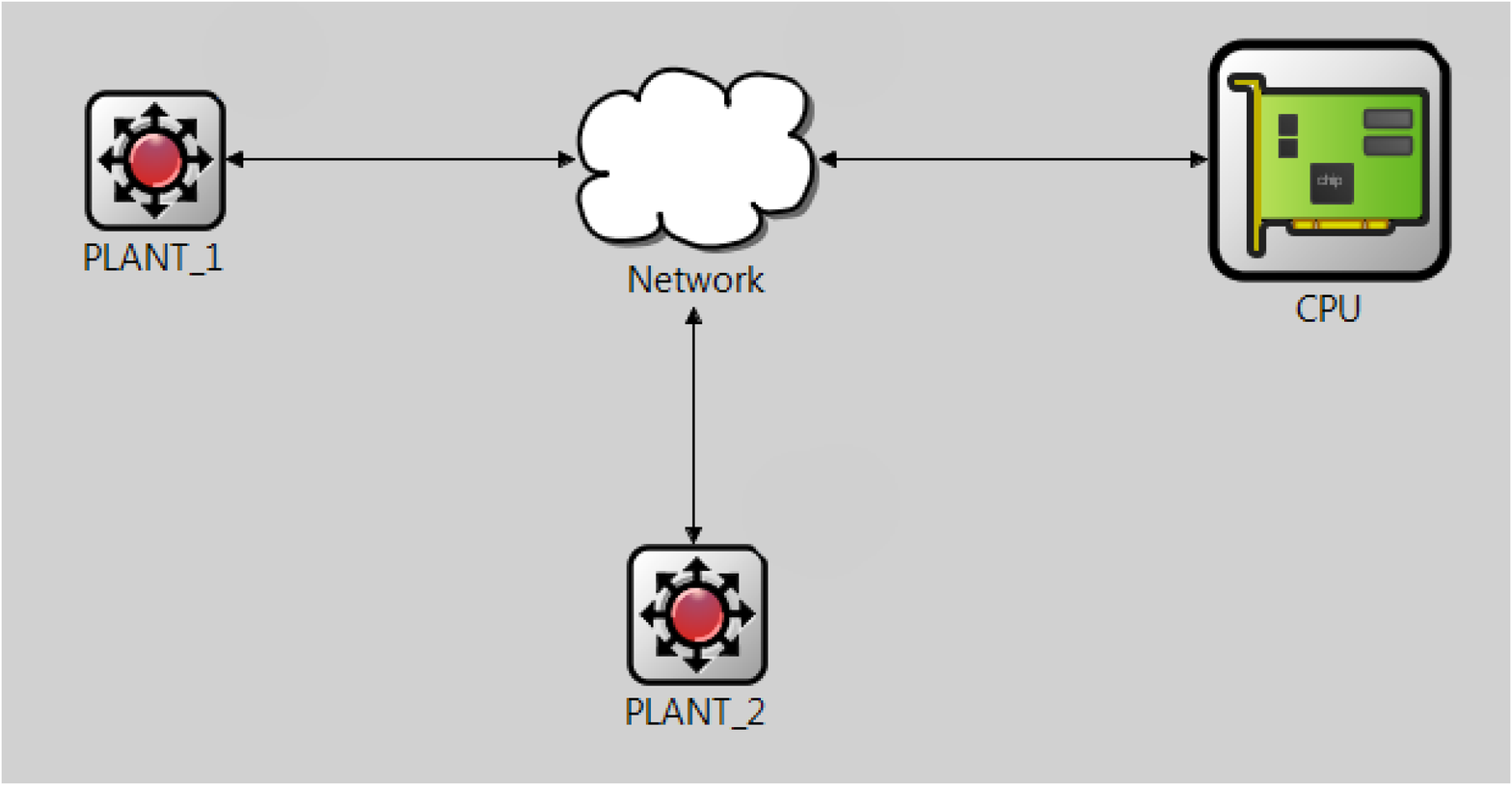}
\caption{OMNeT++ implementation of Networked Control Systems with Symbolic Controller.} 
\label{fig:omnet_scheme}
\end{center}
\end{figure}

\begin{figure}[ht]
\begin{center}
\subfigure{\includegraphics[scale=0.1]{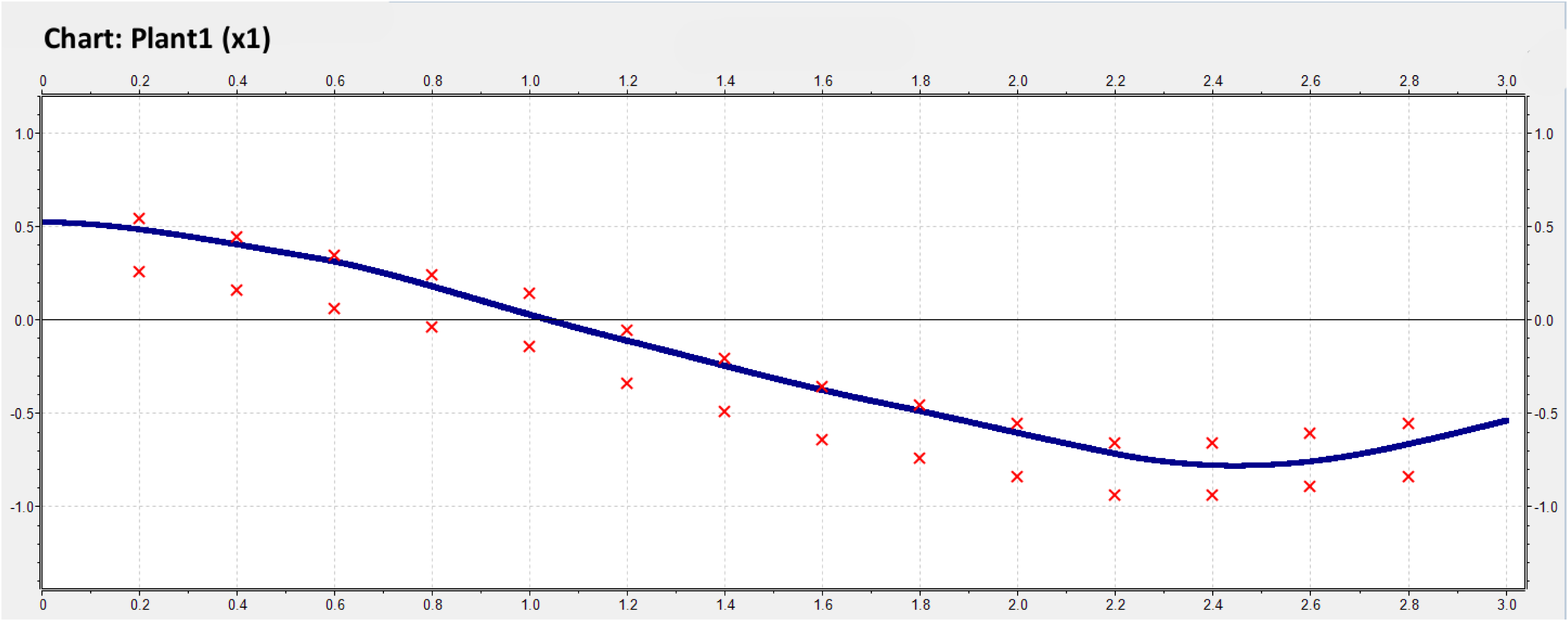}}
\subfigure{\includegraphics[scale=0.1]{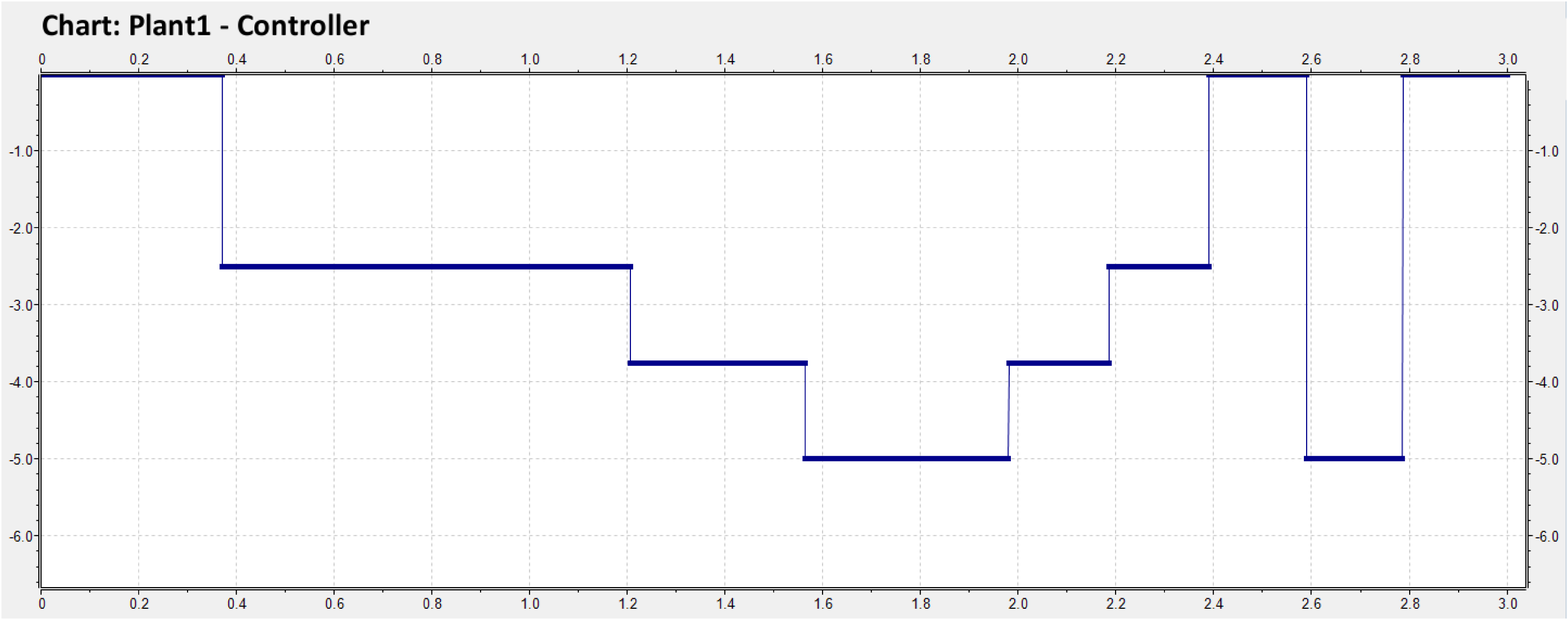}}
\caption{State trajectory and control input for the NCS $\Sigma_a$.} 
\label{fig:sim1}
\end{center}
\end{figure}

\begin{figure}[ht]
\begin{center}
\subfigure{\includegraphics[scale=0.1]{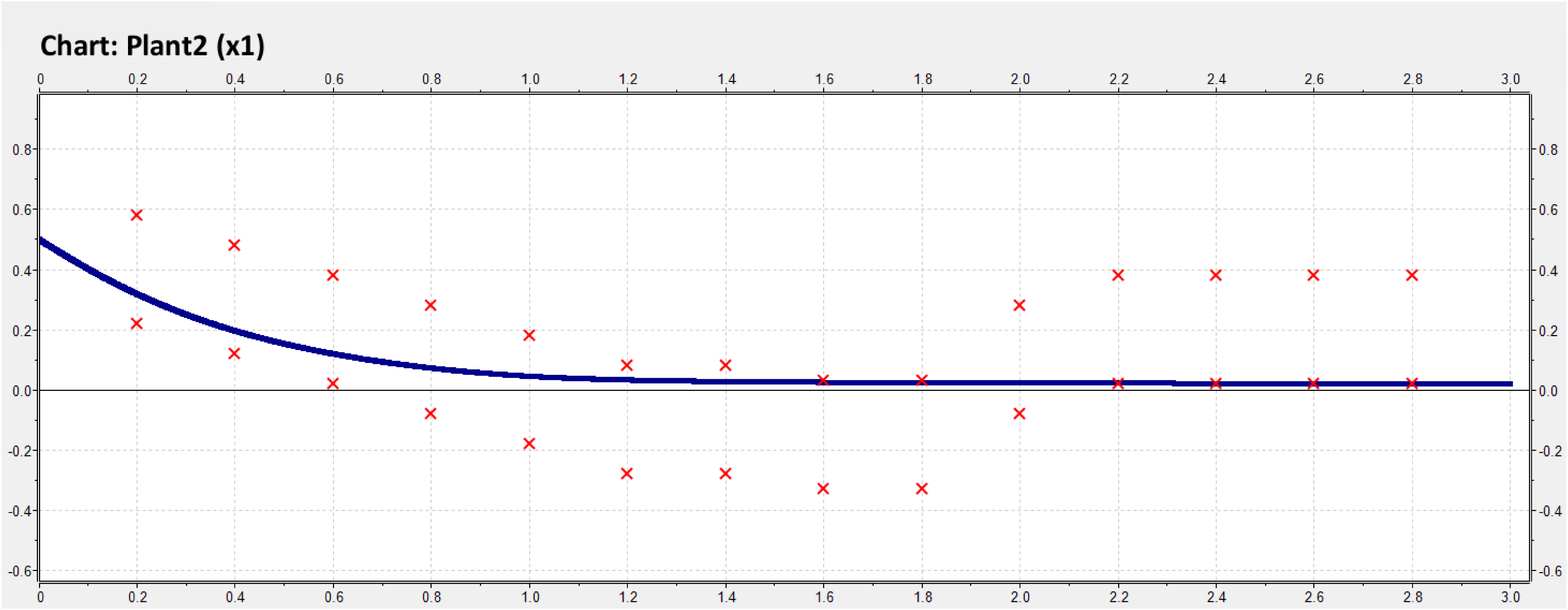}}
\subfigure{\includegraphics[scale=0.1]{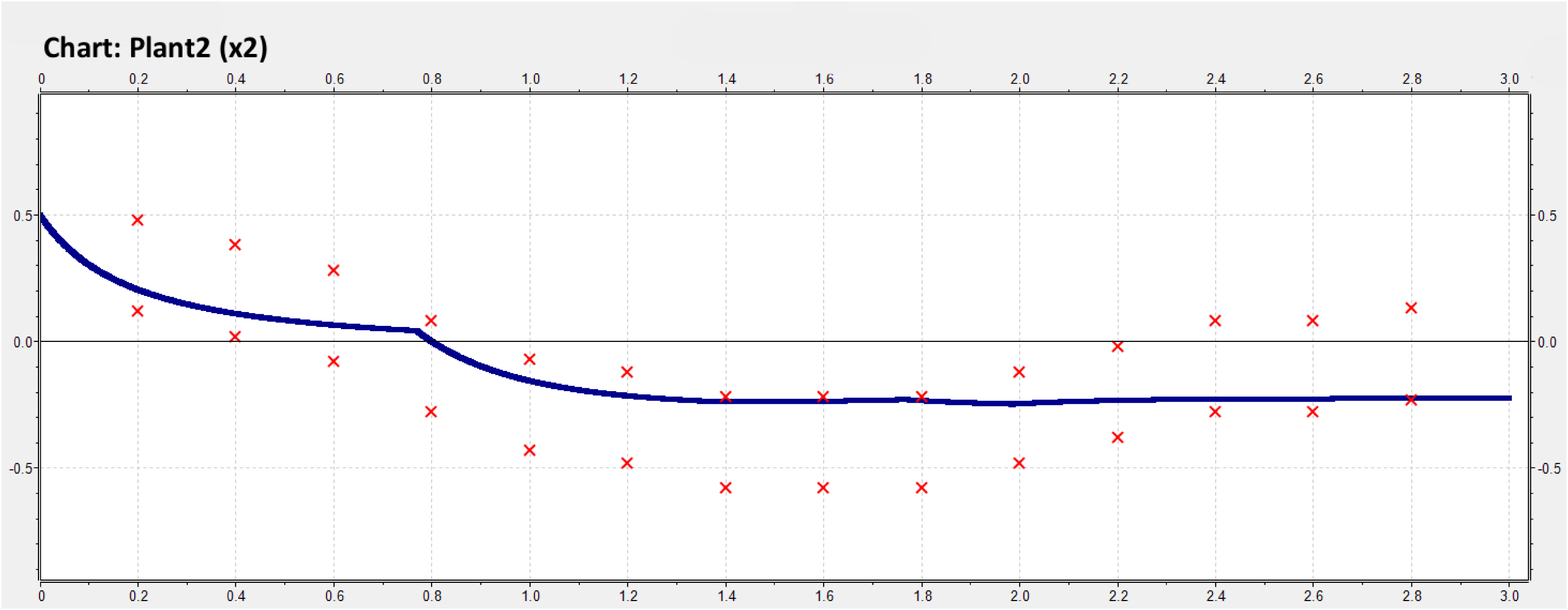}}
\subfigure{\includegraphics[scale=0.1]{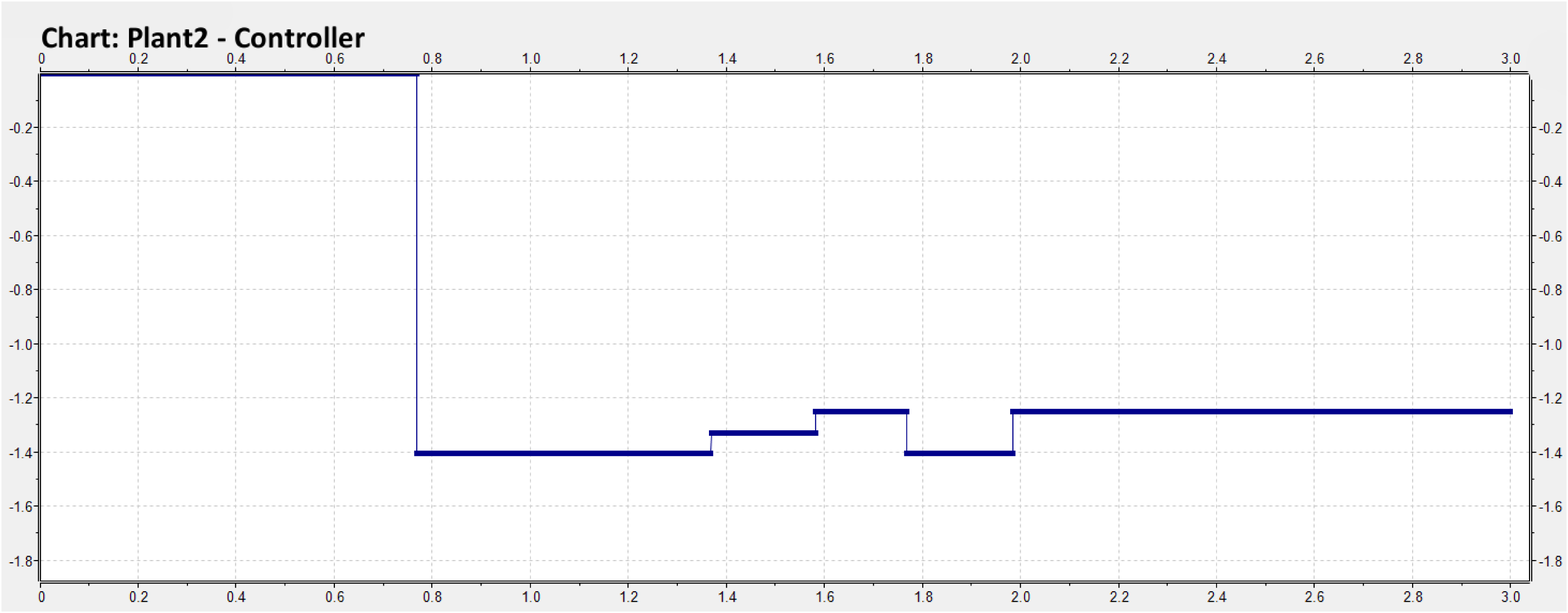}}
\caption{State trajectory and control input for the NCS $\Sigma_b$.} 
\label{fig:sim2}
\end{center}
\end{figure}

\section{Conclusions}\label{sec:conclusion}
In this paper we proposed a symbolic approach to the control design of nonlinear NCS. Under the assumption of $\delta$--GAS, symbolic models were proposed, which approximate NCS in the sense of alternating approximate bisimulation. These symbolic models were used to solve symbolic control problems on NCS where specifications are expressed in terms of automata on infinite strings. The assumption of $\delta$--GAS in the plant control system of the NCS is a key ingredient in our results because if a digital controller is found which enforces the desired specification on the symbolic model, the notion of alternating approximate bisimulation guarantees that the specification is fulfilled on the NCS within a given accuracy that can be chosen as small as desired. Conversely if a control strategy solving the control problem does not exist, the notion of alternating approximate bisimulation guarantees that such a solution does not exist on the original NCS. If compared with existing results on NCS, the main drawback of the proposed results is in the assumption of incremental stability on the plant control systems. 
%We believe that the assumption of incremental stability of the plants in NCS is a crucial ingredient to ensure alternating approximation bisimulation between the NCS and the proposed symbolic models. 
One way to overcome this crucial assumption is to leverage the results reported in \cite{MajidTAC11}, which propose symbolic models approximating (possibly) unstable nonlinear control systems in the sense of alternating approximate simulation. This point is under investigation.

\section*{Acknowledgments}
We are grateful to Pierdomenico Pepe for fruitful discussions on the topics of this paper and to
Daniele De Gregorio and Quirino Lo Russo for the implementation of the example proposed in Section 7 in the OMNeT++  network simulation framework.

\bibliographystyle{abbrv}
\bibliography{biblio1}  

%\balancecolumns 
\end{document}